\documentclass[runningheads,a4paper]{llncs}
\usepackage{a4wide}
\usepackage{amsmath,amssymb}
\usepackage{booktabs}
\usepackage{multirow}
\usepackage{graphicx}
\usepackage{hyperref}
\usepackage{multirow}

\usepackage{xcolor}

\usepackage[misc,geometry]{ifsym}

\newcommand{\safhire}{\textsc{Safhire}}
\newcommand{\eps}{\varepsilon}
\newcommand{\R}{\mathbb{R}}
\newcommand{\Z}{\mathbb{Z}}

\newcommand{\cA}{\mathcal{A}}
\newcommand{\sort}{\mathrm{sort}}

\begin{document}

\title{Shuffling is Not Enough:\\
Breaking Permutation-Based Model Confidentiality
in Hybrid FHE Inference}

\titlerunning{Shuffling is Not Enough}
\author{%
Jiseung Kim\inst{1} \and
Hyung Tae Lee\inst{2 \textup{(\Letter)}}
}
\institute{
Jeonbuk National University, 
\email{jiseungkim@jbnu.ac.kr}
\and
Chung-Ang University, 
\email{hyungtaelee@cau.ac.kr}
}
\authorrunning{Kim and Lee}

\maketitle

\begin{abstract}
Hybrid fully homomorphic encryption~(FHE) inference improves the practicality of private inference by letting the server evaluate linear layers homomorphically while the client decrypts and applies nonlinearities.
Recent schemes attempt to protect model confidentiality by returning noisy, output-permuted responses and appealing to shuffle-model differential privacy~(DP).
We show that this protection fails in the correctness regime required by hybrid FHE systems.
For a $d$-input linear layer, $d+1$ admissible queries suffice for exact recovery of a permutation-invariant layer summary, hence for perfect model distinguishability.
We further show that input DP is orthogonal to model confidentiality and that the local-DP premise required for shuffle amplification cannot hold under correctness-bounded noise.
We recover all linear layers of a \safhire{}-style ResNet-20 end-to-end from TFHE transcripts  with zero error, using $d+1$ queries per layer for a total of $5{,}712$ direct queries. 
Under the same query model, we also confirm exact per-layer recovery on pretrained ImageNet-scale CNNs and ViT-B/16.
The leaked spectra enable fingerprinting, lineage attribution, and improved logit-based extraction, while suppressing them destroys inference utility.

\keywords{Homomorphic encryption \and
Private inference \and Model confidentiality \and
Structural cryptanalysis \and Differential privacy.}
\end{abstract}

\section{Introduction}
\label{sec:intro}

Private inference is increasingly expected to protect both sides of the interaction: the client's input should remain hidden from the server, and the server's model should remain hidden from the client. 
Fully homomorphic encryption~(FHE) offers a principled route to input privacy, but evaluating modern neural networks entirely under FHE remains expensive in latency and bandwidth. Hybrid FHE inference sidesteps this bottleneck by partitioning the computation: the server evaluates each linear layer homomorphically, while the client decrypts, applies the nonlinearity in the clear, and re-encrypts for the next layer.
Recent systems such as \safhire{}~\cite{BCD+25} demonstrate this architecture on real convolution and transformer backbones.

However, the hybrid spirit reintroduces a threat that pure FHE had eliminated: because the client observes intermediate layer outputs in the clear, a curious client can in principle read off the server's weight directly.
To close this gap without reverting to the cost of pure FHE, a line of recent work, such as \safhire{}~\cite{BCD+25}, STIP~\cite{YZL24}, Centaur~\cite{LCZ+25}, and PermLLM~\cite{ZCHZ24}, proposes permutation-based model confidentiality. For example, in \safhire{}, the server evaluates a linear layer, applies a secret permutation to the output coordinates, and returns the resulting ciphertext to the client; since ReLU commutes with coordinate permutation, the shuffled representation can be maintained across rounds.
The resulting confidentiality intuition is appealing: FHE decryption noise acts as a local randomizer, the secret output permutation acts as a shuffler, and shuffle-model differential privacy~(DP)~\cite{BBGN19,FMT21} is invoked to argue that only limited model information is exposed.

This paper shows that the intuition is false in the very regime required for correct hybrid FHE inference.
We study a broad class of permutation-based protection schemes in which the client may issue chosen inputs and observe a noisy permuted response to a linear layer on a known quantization lattice.
Our starting point is the structural observation: if the quantization step is $\gamma$, correct decryption requires the additive noise to remain below $\gamma/2$ in magnitude.
Under exactly this bound, response-side randomness can be removed by rounding, and permutation no longer protects the layer beyond relabeling.

We formalize this observation and show that, for a $d$-input linear layer, $d+1$ admissible queries suffice for exact recovery of a permutation-invariant layer summary, hence for perfect model distinguishability.
In the first round of \safhire{}~\cite{BCD+25}, where no hidden input permutation is applied, the same attack recovers the ordered family itself. 
Since \safhire{}'s threat model permits arbitrary chosen inputs without plaintext validation or well-formness proofs, these queries are admissible in practice.
This immediately implies that any two challenge models with different sorted spectrum multisets are perfectly distinguishable in the model confidentiality game.
This result is distribution-agnostic: it depends only on correctness-bounded noise and known quantization, not on the choice of permutation strategy or on any DP parameter attached to the scheme.

Our analysis also clarifies why the shuffle DP justification does not establish model confidentiality. Input DP and model confidentiality quantify different objects: the former compares neighboring inputs under a fixed model, whereas the latter compares different models under fixed queries. More fundamentally, the local-DP premise required for shuffle amplification cannot hold under correctness-bounded noise. Whenever two admissible inputs induce outputs that differ by at least one lattice step in some coordinate, the corresponding per-coordinate supports are disjoint once the noise remains below $\gamma/2$, forcing the local-DP slack parameter to become vacuous. Thus, the privacy amplification argument collapses before it can be applied.

We validate our analysis on TFHE transcripts~\cite{CGGI20} and ImageNet-scale models. 
For a \safhire{}-style ResNet-20, exact recovery holds for all 21 linear layers using $d+1$ queries per layer for a total of $5{,}712$. The same zero-error behavior persists across eight pre-trained ImageNet-scale CNNs and ViT-B/16 under practical precisions and bounded-noise laws. The resulting leakage enables fingerprinting, lineage attribution, and improved logit-based extraction, while no viable privacy/utility operating points remain once the noise is increased enough to suppress the attack.

In summary, our results make three contributions. First, we provide an exact and query-tight recovery theorem for permutation-based protection under correctness-bounded noise. Second, we show that input DP is orthogonal to model confidentiality in this setting and that the local-DP premise required by shuffle amplification is itself unattainable. Finally, we demonstrate concrete security impact and a sharp privacy/utility dilemma on deployed TFHE transcripts and ImageNet-scale vision models.

\subsection{Related Work}

\smallskip\noindent
\textbf{Model extraction.}
Classical extraction attacks~\cite{TZJRR16,OSF19,JCBKP20}
typically require many oracle queries because nonlinearities such as softmax obscure the linear layers.
For example, the cryptanalytic attack of Carlini et al.~\cite{CJM20} uses $2^{18.5}$--$2^{21.5}$ queries on $4$K--$100$K-parameter MLPs.
By contrast, our attack needs only $d{+}1$ queries per layer, totaling $5{,}712$ on a $0.27$M-parameter ResNet-20.
The gap arises because permutation-based protection directly exposes a sorted linear-layer snapshot, a leakage channel already known to be severe in encrypted databases~\cite{NKW15}.
The link between extraction and fingerprinting~\cite{LZK21}, together with membership inference~\cite{SSSS17}, further supports treating identification-level leakage as a model-confidentiality threat.

\smallskip\noindent
\textbf{Impossibility and attacks on ML defenses.}
Conditional impossibility results are known for black-box observational defenses of polynomial-size decision trees under subexponential learning parity with noise~(LPN)~\cite{K23} and for learnability-preserving data encoding~\cite{XSD24}.
Our result is incomparable, giving an \emph{unconditional} guarantee of exact sorted-spectrum recovery on the narrower channel of Definition~\ref{def:perm-scheme}.
ArrowMatch~\cite{WDC+25} empirically shows that TEE-shielded weight obfuscation alone fails against adaptive clients.

\smallskip\noindent
\textbf{Attacks on cryptographic inference stacks.}
Prior attacks are empirical.
SEEK~\cite{CF22} extracts approximate weights from an $11.7$M-parameter
ResNet-18 over homomorphic encryption/multiparty computation~(MPC) using fewer than $50$ queries per parameter,
about $5{\times}10^8$ in total, with relative error about $3{\times}10^{-4}$.
Its leverage is arbitrary intermediate-feature shifts rather than a permuted
linear snapshot, so it does not target
Definition~\ref{def:perm-scheme}.
Works such as~\cite{CCCM23,FT25} attack pure-FHE and split-model FHE LLM
inference outside the permutation class.

\smallskip\noindent
\textbf{Attacks on permutation defenses and DP limits.}
Thomas et al.~\cite{TZC+25} use sequential-token markers to recover client
\emph{input} tokens in PermLLM, STIP, and Centaur, which is disjoint from our
chosen-input attack on \emph{model} parameters.
The limitations of DP as a \emph{model}-protection, rather than training-data,
primitive are discussed in~\cite{DP20,GS22,YLL+22};
Proposition~\ref{prop:premise} quantifies this limitation in the hybrid-FHE
regime.

\section{Background}
\label{sec:background}
We briefly review the \safhire{} setting and the shuffle-model DP argument used by permutation-based protection schemes.
We need only the quantized plaintext lattice, the correctness condition, the per-round permutation interface, and the chosen-input threat model, not the full algebraic machinery of TFHE~\cite{CGGI20}.

\subsection{Hybrid FHE Inference via \safhire{}}
\label{sec:bg-fhe}

We illustrate hybrid FHE inference through
\safhire{}~\cite{BCD+25}, the primary instantiation
of Definition~\ref{def:perm-scheme}.

\smallskip
\noindent\textbf{RLWE encryption (torus framing).}
\safhire{} follows the TFHE torus convention~\cite{CGGI20}:
plaintexts lie on $\mathcal{T}_p := \tfrac{1}{p}\Z/\Z$, where typically
$p = 2^b$ for $b \in \{8,12,16\}$.
Thus the relevant plaintext values lie on a lattice of spacing $\gamma = 1/p$.
A ring learning with error~(RLWE) ciphertext with additive noise $e$ decrypts correctly when
$|e| < 1/(2p)$.
We therefore write $\gamma = 1/p$ for the quantization step and
$\Delta < \gamma/2$ for the correctness-bounded noise magnitude.
This bound is the key operational condition in our analysis: it implies that decoded layer outputs are observed on a known lattice and perturbed by noise smaller than half a lattice step.

\smallskip\noindent\textbf{Batched protocol.}
Inference proceeds in $L$ rounds, one per linear layer. In round $r$, the server first removes the previous round permutation by applying $\sigma_{r-1}^{-1}$ to the incoming activation, homomorphically evaluates $y=W_rx+b_r$, applies the round-$r$ output permutation $\sigma_r$, and returns the resulting ciphertext to the client.
The client decrypts and applies coordinate-wise ReLU before re-encrypting the activation for the next round. 
Because ReLU commutes with coordinate permutation, the shuffled representation can be maintained across rounds.
As in \safhire{}, the final layer is left unshuffled so that the client can read the argmax directly. The permutations are derived from a secret session seed, fixed within a session and re-randomized across sessions~\cite[Section~4.4]{BCD+25}.

\smallskip\noindent\textbf{Threat model.}
We adopt \safhire{}'s threat model~\cite[Section~3]{BCD+25}.
The server is honest-but-curious, whereas the client may send arbitrary inputs and may attempt to reverse-engineer the model.
No distributional restriction, plaintext validation, or zero-knowledge proof of well-formedness is imposed on client queries.
In particular, the input encoding packs a flattened feature chunk into RLWE plaintext coefficients~\cite[Section~4.3]{BCD+25}, so the basis inputs $\{0, e_1, \ldots, e_d\}$ used in our attack are admissible. These properties are precisely the ingredients exploited later: the client observes a shuffled, noisy linear-layer response on a known quantization lattice and may query it adaptively.

\subsection{Shuffle Model of Differential Privacy}
A randomized mechanism $M$ satisfies $(\eps,\delta)$-differential privacy~\cite{DR14}
if $\Pr[M(D)\in S] \le e^\eps \Pr[M(D')\in S] + \delta$
for every neighboring pair $D,D'$ and every measurable set $S$.
In the shuffle model~\cite{BBGN19,FMT21}, $n$ local randomizers first privatize
their inputs, after which a trusted shuffler applies a uniform random permutation
to the resulting messages.
If each local randomizer satisfies $(\eps_0,\delta_0)$-local DP and the parameters
lie in the appropriate regime, the shuffled mechanism enjoys an amplified central-DP
guarantee, with privacy improving roughly as $O(\eps_0/\sqrt{n})$.

Permutation-based defenses in our class instantiate this template by treating FHE
decryption noise as the local randomization step and the secret output-channel
permutation as the shuffler.
The claim is that, because the client observes only a noisy permuted response
rather than the ordered vector $Wx+b$, shuffling amplifies the privacy contributed
by the noise.
Our analysis in Section~\ref{sec:dp} examines this claim in the hybrid-FHE setting.
The central question is whether the correctness-bounded noise required for reliable
decryption can also satisfy the local-DP premise needed for shuffle amplification.

\section{Breaking Permutation-Based Model Protection}
\label{sec:dp}

In this section, we show that permutation-based protection is incompatible with hybrid FHE inference under the correctness-bound noise regime. 
We first formalize the scheme class and prove exact sorted-spectrum recovery from $d+1$ adaptive chosen queries.
We then show that shuffle-DP does not ensure model confidentiality, and that increasing the noise enough to suppress the attack quickly destroys utility.
We begin with notation used throughout the paper.

\smallskip
\noindent\textbf{Notation.}
For $v \in \R^m$, let $\sort(v)$ denote the vector obtained by arranging its entries in non-decreasing order. 
For a linear layer $W \in \R^{m\times d}$ and $b \in \R^m$, define $s_i(W,b) := \sort(W_{:,i}+b)$ as the
\emph{sorted column spectrum} of the $i$-th column $W_{:,i}$, and let $L(W,b) := (s_1(W,b),\ldots,s_d(W,b))$ denote the
resulting ordered family.
By construction, $L(W, b)$ is precisely the information preserved under an output-channel permutation.

\subsection{Permutation-Based Protection Schemes}

\begin{definition}[Permutation-Based Protection]
\label{def:perm-scheme}
Let $\mathcal{X}\subseteq \R^d$ be a plaintext query space containing
$\{0,e_1,\ldots,e_d\}$.
A \emph{permutation-based protection scheme}
for a linear layer $W \in \R^{m\times d}$ and $b \in \R^m$
is a pair of algorithms
$\Pi = (\mathsf{Query}, \allowbreak \mathsf{Respond})$
such that:
\begin{itemize}
    \item $\mathsf{Query}(x)$: on input $x \in \mathcal{X}$, the client produces a well-formed query object $q$.

    \item $\mathsf{Respond}(W,b,q)$: on input $q$, the server determines an effective input
    $x^\star \in \R^d$ and returns a response from which the client obtains $\widetilde{y}=\pi(Wx^\star+b)+\eta,$ where $\pi \in S_m$ is a secret output permutation chosen according to the
    scheme's permutation strategy and $\|\eta\|_\infty<\Delta$.

    If the scheme uses a hidden input permutation, then
$x^\star=\rho(x)$ for some secret $\rho \in S_d$; otherwise $x^\star=x$ (i.e., $\rho= \mathrm{id}$).
\end{itemize}

We say that $\Pi$ has \emph{quantization step} $\gamma$ if, for every
$x \in \{0,e_1,\ldots,e_d\}$ and the corresponding effective input $x^\star$,
the vector $Wx^\star+b$ lies on a lattice of minimum spacing $\gamma$.
\end{definition}

\begin{remark}
The following remarks clarify the operational meaning of Definition~\ref{def:perm-scheme} and its instantiation in FHE-based protocols.
\begin{itemize}
    \item Definition~\ref{def:perm-scheme} subsumes \safhire{} round-$r$ interface by taking
$x^\star = \sigma_{r-1}^{-1}(x)$ and $\pi = \sigma_r$, with $\sigma_0 = \mathrm{id}$
in the first round.
The special case $x^\star = x$ recovers the output-shuffling-only interface.

    \item The client may issue adaptive chosen queries
$x^{(1)},x^{(2)},\ldots \in \mathcal{X}$,
obtain $q^{(t)} \leftarrow \mathsf{Query}(x^{(t)})$,
and observe the corresponding decoded responses
$\widetilde{y}^{(t)}$.

    \item The zero query ensures that $b$ lies on the same lattice as the basis responses.
If the relevant values lie in $\frac{1}{p}\Z$, then one may take $\gamma=1/p$.

    \item In FHE instantiations, $\mathsf{Query}(x)$ is the client-side
encryption of $x$, and $\mathsf{Respond}$ returns a ciphertext whose
decryption yields $\widetilde y$.
The basis queries are admissible whenever arbitrary client activations are
permitted and no validation mechanism rules them out.
\end{itemize}
\end{remark}

Definition~\ref{def:perm-scheme} abstracts a common pattern in which the client observes a permuted, noisy linear-layer response and may issue chosen queries against it.
While motivated here by hybrid FHE inference, the definition itself is not specific to FHE.
\safhire{}'s first layer is the clearest instantiation: the server evaluates a linear layer, applies a secret output permutation, and returns the result under correctness-bounded noise. 
Its later layers additionally apply a server-side input unshuffle, which we analyze separately in Section~\ref{sec:multi-round}.
Section~\ref{sec:discussion} examine Fission~\cite{UMS+25}, STIP~\cite{YZL24}, Centaur~\cite{LCZ+25}, and PermLLM~\cite{ZCHZ24} with respect to the hypotheses of Definition~\ref{def:perm-scheme}.
Among them, Fission is MPC-based rather than FHE-based, and is included only for comparison at
the level of hypotheses rather than as a direct instantiation of the definition.
A scheme that augments client queries with zero-knowledge well-formedness proofs would violate hypothesis~(C) and therefore fall outside the definition; none of the schemes considered here does so.

\begin{theorem}[Exact Sorted-Spectrum Recovery]
\label{thm:impossibility}
Let $\Pi$ be a permutation-based protection scheme under
Definition~\ref{def:perm-scheme}, with noise bound $\Delta$
and quantization step $\gamma \ge 2\Delta$.
Conditioned on $\|\eta\|_\infty < \Delta$, there exists an
efficient adversary $\cA$ that, using $d{+}1$ adaptive
queries, recovers $\sort(b)$ from the zero query and the
sorted-spectrum multiset $\{\!\{\sort(W_{:,j}+b)\}\!\}_{j=1}^d$
from the $d$ basis queries $e_1,\ldots,e_d$.
The recovery is exact on every correctly recovered noise
realization.
\end{theorem}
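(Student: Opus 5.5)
The adversary $\cA$ is simple: it issues the zero query and then the $d$ basis queries $e_1,\ldots,e_d$, rounds each decoded response coordinatewise to the nearest lattice point, and sorts the result. The argument has three steps. First, noise removal by rounding. Second, invariance of sorting under the secret permutations. Third, bookkeeping for the hidden input permutation $\rho$, which turns the ordered family into a multiset.

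\textbf{Rounding step.} Fix a query $x\in\{0,e_1,\ldots,e_d\}$ with effective input $x^\star$, and set $y:=Wx^\star+b$. By the quantization-step hypothesis of Definition~\ref{def:perm-scheme}, every coordinate of $y$, and hence of $\pi(y)$, lies on the known lattice of minimum spacing $\gamma$. The client observes $\widetilde y=\pi(y)+\eta$ with $\|\eta\|_\infty<\Delta\le\gamma/2$. Each coordinate $\widetilde y_k$ therefore lies strictly within distance $\gamma/2$ of the lattice point $\pi(y)_k$. Any other lattice point is at distance at least $\gamma$ from $\pi(y)_k$, so by the triangle inequality it is at distance strictly greater than $\gamma/2$ from $\widetilde y_k$. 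The nearest-lattice-point map $\lfloor\cdot\rceil_\gamma$ is thus well defined without ties and returns $\pi(y)$ exactly. The strict inequality $\|\eta\|_\infty<\Delta$ is what rules out ties at the midpoint.

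One point needs care: the lattice must be known to the client, including its offset. For a generic offset, I would use the remark following Definition~\ref{def:perm-scheme}. The zero query places $b$ on the same lattice as the basis responses, and in the FHE instantiation the lattice is simply $\gamma\Z=\frac1p\Z$.

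\textbf{Invariance step.} Sorting is invariant under coordinate permutations, so $\sort(\lfloor\widetilde y\rceil_\gamma)=\sort(\pi(y))=\sort(y)$ for whatever $\pi$ the server chose. The permutation strategy is irrelevant, and $\pi$ may even differ from query to query. For the zero query, $x^\star=\rho(0)=0$, so the adversary recovers $\sort(b)$ exactly. For a basis query $e_i$, we have $x^\star=\rho(e_i)=e_{\rho(i)}$, which gives $y=W_{:,\rho(i)}+b$. Sorting then yields $\sort(W_{:,\rho(i)}+b)$.

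\textbf{Collection step.} As $i$ ranges over $1,\ldots,d$, the index $\rho(i)$ ranges bijectively over $\{1,\ldots,d\}$. The $d$ recovered vectors therefore form exactly the multiset $\{\!\{\sort(W_{:,j}+b)\}\!\}_{j=1}^d$. When $\rho=\mathrm{id}$, the first-round case, the adversary even obtains the ordered family $L(W,b)$.

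The adversary makes $d+1$ queries. Each requires $O(m)$ rounding operations and $O(m\log m)$ sorting work, so $\cA$ is efficient. Its recovery is exact on every noise realization satisfying the conditioning event $\|\eta\|_\infty<\Delta$, which is all the statement requires.
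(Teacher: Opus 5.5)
Your proposal is correct and follows the paper's proof essentially step for step. The paper likewise applies the round-then-sort operator $\sort(Q_\gamma(\tilde y))$ to the zero query to obtain $\sort(b)$ and to each basis query $e_i$ to obtain $\sort(W_{:,\rho(i)}+b)$, then uses bijectivity of $\rho$ to get the multiset; your triangle-inequality argument for tie-free rounding and your remark about the lattice offset just make explicit details the paper leaves implicit.
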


\begin{proof}
Let $Q_\gamma$ denote rounding to the nearest
$\gamma$-lattice point, and define the round-then-sort operator
$\mathsf{RTS}(\tilde y) := \sort(Q_\gamma(\tilde y)).$

\smallskip\noindent\emph{Bias recovery.}
For the zero query $x=0$, the response satisfies $\tilde y=\pi(b)+\eta.$
Since $b$ lies on the $\gamma$-lattice and $\gamma \ge 2\Delta$, we have
$|\eta_j|<\Delta\le \gamma/2$ for every coordinate~$j$.
Hence rounding recovers $\pi(b)$ exactly, and sorting yields $\sort(b)$.

\smallskip\noindent\emph{Spectrum recovery.}
For each $i\in[d]$, query $x=e_i$.
By Definition~\ref{def:perm-scheme}, the effective input is
$x^\star=\rho(e_i)=e_{\rho(i)}$, so the server computes
$Wx^\star+b = W_{:,\rho(i)}+b,$
and the observed response is $\tilde y=\pi(W_{:,\rho(i)}+b)+\eta.$
The same round-then-sort argument recovers $\sort(W_{:,\rho(i)}+b)$ exactly.
As $i$ ranges over $[d]$ and $\rho\in S_d$ is a bijection,
the adversary recovers the multiset $\{\!\{\sort(W_{:,j}+b): j\in[d]\}\!\}.$
The output is therefore fully determined by $W$ and $b$ on
every noise realization satisfying $\|\eta\|_\infty<\Delta$. \qed
\end{proof}

Theorem~\ref{thm:impossibility} gives the general recovery guarantee under a hidden input permutation~$\rho$.
When $\rho=\mathrm{id}$, as in the first round of \safhire{}, the recovered spectra are additionally labeled by the queried column indices, so the adversary obtains the ordered family $L(W,b)$ rather than merely its underlying multiset.
More generally, the guarantee is distribution-agnostic.
The recovered output is a deterministic function of $W$ and $b$, since all response-side randomness is eliminated by rounding.
Accordingly, the guarantee depends only on the bound $\|\eta\|_\infty<\Delta$ and is therefore independent of the output-permutation strategy, the noise law, and any DP parameter attached to the scheme.
In particular, a DP parameter $\eps$ constrains the ratio of output distributions across neighboring inputs for a fixed model, and thus does not affect any quantity used by the attack.

The hypothesis $\gamma \ge 2\Delta$ in Theorem~\ref{thm:impossibility} is also tight.
Indeed, if $\Delta > \gamma/2$, then no deterministic algorithm can guarantee exact recovery over all noise realizations; see Proposition~\ref{prop:tight} in Appendix~\ref{app:proofs}.

\begin{proposition}[Query Complexity Tightness]
\label{prop:query-tight}
Let $\Pi$ be a permutation-based protection scheme under
Definition~\ref{def:perm-scheme} with $b=0$.
Then any adaptive adversary that recovers
$\{\!\{\sort(W_{:,j})\}\!\}_{j=1}^d$
on every instance requires at least $d$ queries.
Together with Theorem~\ref{thm:impossibility}, this gives exact minimax
query complexity: $d$ for bias-free layers, and $d+1$ when $\sort(b)$ must
also be recovered.
\end{proposition}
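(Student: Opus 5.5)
The plan is an adversary (information-theoretic) argument: after fewer than $d$ queries, two instances consistent with the same transcript must differ in the target multiset. I will work in the noiseless, output-unshuffled case, taking $\eta=0$ and $\pi=\mathrm{id}$. This only strengthens the lower bound, since the adversary receives at least as much information as in any noisy or permuted realization. Each answer is then the exact vector $Wx^\star$ with $x^\star=\rho(x)$, and I will take $\rho=\mathrm{id}$ as well.

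Fix a deterministic adaptive adversary making $t\le d-1$ queries; randomized adversaries will be handled by fixing their coins, since the lower bound is required on every instance. Against the all-zero matrix $W_0=0$, every answer is $0$, so the queries $x^{(1)},\dots,x^{(t)}$ are fully determined, and together they span a subspace $V\subseteq\R^d$ of dimension at most $d-1$. I will choose a nonzero $u\perp V$. Since $V$ is a span of admissible queries drawn from $\mathcal{X}$, which may consist of lattice points, I can take $u$ rational and rescale it to be integral.

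Next I will set $W_1:=\gamma\, e_1 u^{\top}$, a matrix whose only nonzero row is the first, equal to $\gamma u^{\top}$. Then $W_1x^{(s)}=0$ for every $s$, so the adversary receives the same answers on $W_1$ as on $W_0$. Inductively over the adaptive choices, it issues the same queries and produces the same output on both instances. For the outputs to be correct on both, the target multisets would have to agree. For $W_0$ every column spectrum is $\sort(0)$. For $W_1$ there is some $j$ with $u_j\neq0$, and column $j$ has the single nonzero entry $\gamma u_j$, so its sorted spectrum differs from $\sort(0)$. Thus the multisets differ, and the adversary errs on one of the two instances.

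The main obstacle is admissibility. The lattice condition in Definition~\ref{def:perm-scheme} requires $Wx^\star$ to lie on the $\gamma$-lattice, but only for the basis inputs; it is satisfied here because every entry of $W_1$ is an integer multiple of $\gamma$. If a reviewer insists that arbitrary queries must also land on the lattice, the integrality of $u$ handles this for integral queries. Together with the $d$ basis queries of Theorem~\ref{thm:impossibility}, the argument gives the exact query complexity $d$ for bias-free layers. For $d+1$ I would argue analogously with an additional nonzero bias: before any query, $\sort(b)$ is unconstrained, and an orthogonality argument in $\R^{d+1}$, treating $(x,1)$ as the effective input, rules out recovering it with only $d$ queries.
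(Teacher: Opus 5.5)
Your argument is correct and essentially matches the paper's proof. Fewer than $d$ queries span a proper subspace $V$, so a rank-one perturbation along a nonzero $u\perp V$ leaves every response, and hence the whole adaptive transcript, unchanged while changing the sorted-spectrum multiset. Your concrete pair $W_0=0$, $W_1=\gamma e_1u^{\top}$ makes the separation step immediate where the paper appeals to genericity, and your $(x,1)\in\R^{d+1}$ sketch justifies the $d+1$ bias claim that the paper only asserts; choosing $u$ rational is unnecessary here (the paper uses real $W$ and $u$), and it is only possible when the queries themselves are rational.
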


\begin{proof}
Fix any adaptive strategy using $k \le d-1$ queries
$x^{(1)},\ldots,x^{(k)} \in \R^d$.
The subspace $V = \mathrm{span}\{x^{(1)},\ldots,x^{(k)}\}$
has dimension at most $k \le d-1 < d$,
so there exists a unit vector $v \in \R^d$ with
$v \perp x^{(t)}$ for all $t \in [k]$.
Define $W' := W + uv^T$ for any nonzero $u \in \R^m$.
For every query $x^{(t)}$,
$W'x^{(t)} = Wx^{(t)} + u\underbrace{(v^Tx^{(t)})}_{=0} = Wx^{(t)},$
so $(W',0)$ and $(W,0)$ produce identical responses
under any permutation strategy and noise realization.
The adversary's view is identical, hence it outputs
the same answer for both models.
Since $v \ne 0$, there exists $j^* \in [d]$ with
$v_{j^*} \ne 0$, so $W'_{:,j^*} = W_{:,j^*} + uv_{j^*}$.
For generic $W$ and $u$, this column is distinct from
every column of $W$, so the sorted-spectrum multisets differ.
The adversary therefore errs on at least one of
$(W,0)$ and $(W',0)$. \qed
\end{proof}

\begin{corollary}
\label{cor:safhire}
The first round of \safhire{}, which is directly queried by the client, satisfies the hypotheses of Theorem~\ref{thm:impossibility}.
At precision $p$, it has $\gamma=1/p$ and $\Delta=1/(2p)$, so $\gamma=2\Delta$, and the condition $\|\eta\|_\infty<\gamma/2$ is exactly the FHE correctness condition.
Under standard lattice-based FHE parameters, this condition fails only with negligible probability, typically below $2^{-128}$~\cite{BCD+25}.
Hence, over at most $\mathrm{poly}(\lambda)$ attack queries, exact sorted-spectrum recovery holds with probability $1-\mathsf{negl}(\lambda)$.
\end{corollary}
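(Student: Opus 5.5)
The plan is to reduce the corollary to Theorem~\ref{thm:impossibility}: first check that the first \safhire{} round is an instance of Definition~\ref{def:perm-scheme}, then identify the parameters $\gamma$ and $\Delta$, and finally turn the per-query conditioning on $\|\eta\|_\infty<\Delta$ into a high-probability statement with a union bound.

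\textbf{Step 1: the interface.} By the first item of the remark after Definition~\ref{def:perm-scheme}, round one has $\sigma_0=\mathrm{id}$, so $x^\star=x$ and $\pi=\sigma_1$. The client's query is an encryption of an arbitrary flattened chunk. By the threat model of Section~\ref{sec:bg-fhe}, there is no validation or well-formedness proof, so $\{0,e_1,\ldots,e_d\}\subseteq\mathcal{X}$. Decryption returns $\sigma_1(W_1x+b_1)$ plus the phase error $\eta$.

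\textbf{Step 2: the lattice.} Quantized weights and biases lie in $\frac1p\Z$ on the torus $\mathcal{T}_p$. Hence $b_1$ and each $W_{:,i}+b_1$ lie on a lattice of spacing $\gamma=1/p$, as the third remark item states. The correctness condition $|e|<1/(2p)$ gives $\Delta=1/(2p)=\gamma/2$. So $\gamma\ge2\Delta$ holds with equality, which is the boundary case the theorem still allows because the noise bound is strict.

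\textbf{Step 3: the probability statement.} Let $E_t$ be the event that query $t$ decrypts with $\|\eta^{(t)}\|_\infty<\Delta$. This is exactly the coordinatewise FHE correctness event. Under standard parameters,
\[
\Pr[\neg E_t]\le m\cdot\mathsf{negl}(\lambda),
\]
which the cited analysis puts below $2^{-128}$. Taking a union bound over the $d+1\le\mathrm{poly}(\lambda)$ queries, all the $E_t$ hold simultaneously with probability at least $1-\mathrm{poly}(\lambda)\cdot\mathsf{negl}(\lambda)=1-\mathsf{negl}(\lambda)$. On that event, Theorem~\ref{thm:impossibility} applies deterministically and yields exact recovery. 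Since $\rho=\mathrm{id}$ in round one, the recovered spectra are also labeled by column index, giving $L(W,b)$.

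\textbf{Main obstacle.} The subtle point is Step 3's reliance on the cited failure bound. That bound is stated for honestly generated ciphertexts, while the adversary's inputs are chosen. I would address this by noting that the client encrypts its own queries honestly; it is only the plaintexts that are adversarial. The noise growth of the server's linear evaluation depends on $\|W\|$ and the input norm. The basis inputs have norm $1$, which is no larger than typical activations, so the correctness parameters chosen for honest use still cover these queries.
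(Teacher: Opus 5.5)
Your proposal is correct and follows the route the paper takes implicitly. You instantiate Definition~\ref{def:perm-scheme} with $\sigma_0=\mathrm{id}$, read off $\gamma=1/p=2\Delta$ from the torus encoding and the correctness bound, and union-bound the per-query FHE correctness failure over polynomially many queries, so that Theorem~\ref{thm:impossibility} applies deterministically on the good event.

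One small correction to your closing remark concerns your claim that noise growth depends on the input norm. With plaintext weights, the output noise is a linear combination of the fresh encryption noise with coefficients determined by $W$, plus any message-independent key-switching noise. It does not depend on the encrypted message at all. So chosen basis plaintexts are covered by the honest-use failure bound without any appeal to their norm. The input norm only matters for avoiding wrap-around in the plaintext space.
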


\subsection{Input DP and Model Confidentiality Are Orthogonal}

We begin by emphasizing that input DP and model
confidentiality are distinct notions.
For a fixed model~$W$, input DP bounds the ratio of
output distributions induced by two neighboring
inputs $x$ and $x'$.
Model confidentiality, by contrast, concerns the
distinguishability of two different models under a
fixed input.
The two quantifications are therefore logically
independent.
Nevertheless, prior permutation-based proposals
invoke input DP for model
confidentiality.

\begin{definition}
\label{def:mc-game}
The adversary $\cA$ selects challenge models
$(W_0,b_0)$ and $(W_1,b_1)$ whose sorted-spectrum
multisets differ, i.e., 
$\{\!\{s_i(W_0,b_0): i \in [d]\}\!\}
\ne
\{\!\{s_i(W_1,b_1): i \in [d]\}\!\},$
and submits them to the challenger.
The challenger samples
$\beta \xleftarrow{\$} \{0,1\}$ and grants $\cA$
oracle access to
$\Pi.\mathsf{Respond}(W_\beta,b_\beta,\cdot)$.
$\cA$ then issues adaptive queries and
outputs a bit $\hat{\beta}$.
We define the model confidentiality advantage of $\cA$ by
$\mathsf{Adv}^{\mathrm{mc}}_{\Pi,\cA}
:= |2\Pr[\hat{\beta}=\beta]-1|.$
\end{definition}

\begin{proposition}
\label{prop:dp-direction}
Let $\Pi$ be a permutation-based protection scheme under Definition~\ref{def:perm-scheme}
with $\gamma \ge 2\Delta$.
Then, there exists an adversary $\cA$ issuing $d$ chosen queries such that $\mathsf{Adv}^{\mathrm{mc}}_{\Pi,\cA} = 1$, regardless of any input DP guarantee satisfied by  
$\Pi.\mathsf{Respond}(\cdot\,; W)$.
\end{proposition}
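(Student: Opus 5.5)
The adversary will run only the $d$ basis queries from Theorem~\ref{thm:impossibility} and then compare the recovered multiset with the two candidates it chose itself. The zero query is not needed, because the game of Definition~\ref{def:mc-game} is phrased in terms of the multisets $\{\!\{s_i(W_\beta,b_\beta)\}\!\}$. These already include the bias inside each column spectrum $\sort(W_{:,i}+b)$.

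\emph{Construction of $\cA$.} $\cA$ picks any two challenge models $(W_0,b_0)$ and $(W_1,b_1)$ admissible for $\Pi$. Admissible means that the vectors $W_\beta x^\star+b_\beta$ for $x\in\{e_1,\ldots,e_d\}$ lie on the known $\gamma$-lattice. $\cA$ also requires that the sorted-spectrum multisets $M_0\ne M_1$ of the two models differ. It computes $M_0$ and $M_1$ offline. It then queries $e_1,\ldots,e_d$, applies the round-then-sort operator $\mathsf{RTS}$ to each response, and forms the multiset $\widehat M$ of the $d$ results. It outputs $\hat\beta=0$ if $\widehat M=M_0$, and $\hat\beta=1$ otherwise.

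\emph{Correctness.} The spectrum-recovery part of the proof of Theorem~\ref{thm:impossibility} does not use the zero query. It shows that, for whichever hidden $\rho$ and $\pi$ the scheme uses, each basis response rounds and sorts to $\sort(W_{\beta,:,\rho(i)}+b_\beta)$. Since $\rho$ is a bijection, $\widehat M=M_\beta$. In Definition~\ref{def:perm-scheme} the bound $\|\eta\|_\infty<\Delta\le\gamma/2$ is part of the scheme, not an event of some probability. Hence $\widehat M=M_\beta$ holds on every noise realization and for every permutation strategy. Because $M_0\ne M_1$, this gives $\hat\beta=\beta$ with probability $1$, so $\mathsf{Adv}^{\mathrm{mc}}_{\Pi,\cA}=|2\cdot 1-1|=1$.

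\emph{Independence from input DP.} The output $\hat\beta$ is a deterministic function of $(W_\beta,b_\beta)$ alone. Any input-DP guarantee for $\Pi.\mathsf{Respond}(\cdot\,;W)$ constrains only the response distributions for neighboring inputs under a fixed model. It therefore places no constraint on the event $\widehat M=M_\beta$, and the advantage is $1$ whatever DP parameters $(\eps,\delta)$ the scheme claims. I would add one sentence making this explicit.

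\emph{Main obstacle.} The only delicate step is admissibility: both challenge models must satisfy the quantization hypothesis of Definition~\ref{def:perm-scheme}, so that rounding is exact under either $\beta$. I will handle this by having $\cA$ choose both models on the scheme's lattice. For example, take integer multiples of $\gamma$ that differ in a single column entry, which already makes $M_0\ne M_1$. Then both models fall under Theorem~\ref{thm:impossibility}.
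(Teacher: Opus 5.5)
Your proposal is correct and follows the paper's proof essentially step for step. The adversary issues the $d$ basis queries, uses round-then-sort to recover $\mathcal{M}_\beta$ regardless of the hidden $\rho$ and $\pi$, and outputs the unique bit whose multiset matches, since $\mathcal{M}_0\neq\mathcal{M}_1$; input DP is dismissed as a single-model constraint. Your explicit admissibility step, placing both challenge models on the $\gamma$-lattice so that Theorem~\ref{thm:impossibility} applies under either $\beta$, is a point the paper leaves implicit, and your single-entry construction does guarantee $M_0\neq M_1$.
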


\begin{proof}
The adversary $\cA$ issues the $d$ basis queries
$e_1,\ldots,e_d$.
By Theorem~\ref{thm:impossibility}, the round-then-sort
procedure recovers
$\sort(W_{\beta,:,\,\rho(i)}+b_\beta)$ for each
$i \in [d]$, and hence the $d$ responses reveal the
multiset $\mathcal{M}_\beta :=
\{\!\{\sort(W_{\beta,:,j}+b_\beta)\}\!\}_{j=1}^d.$
Since $\rho$ is a bijection on $[d]$,
this multiset is recovered regardless of the
hidden permutation $\rho$.
By Definition~\ref{def:mc-game},
$\mathcal{M}_0 \ne \mathcal{M}_1$ as multisets, so $\cA$ outputs
the unique bit $\beta$ whose multiset matches the
observation and 
$\mathsf{Adv}^{\mathrm{mc}}_{\Pi,\cA}=1$.
By contrast, input DP constrains single-model ratios
of the form
$\Pr[f(x;W)\in S]/\Pr[f(x';W)\in S]$, which are
orthogonal to the two-model comparison required in
Definition~\ref{def:mc-game}. \qed
\end{proof}

\subsection{The Shuffle DP Premise Cannot Hold}
\label{sec:shuffle-vacuous}

Shuffle amplification is conditional on an
$(\eps_0,\delta_0)$-local DP input. Under the FHE
correctness bound, each local mechanism moves its
output by less than~$\gamma/2$, so neighboring inputs
whose outputs differ by one lattice step produce
disjoint per-coordinate supports. The non-overlapping
region captures the \emph{entire} support of one side,
driving $\delta_0$ to~$1$ and voiding any downstream
amplification.

\begin{proposition}
\label{prop:premise}
Let $M_k(x) = (Wx+b)_k + \eta_k$ denote the per-coordinate
output of a permutation-based scheme in
Definition~\ref{def:perm-scheme}, and assume that each
noise coefficient $\eta_k$ lies almost surely in the
interval $(-B, B)$, i.e., in the correctness-conditioned
bounded-noise regime with $B < \gamma/2$.
Let $x, x'$ be inputs such that, for some coordinate $k$,
the sensitivity satisfies
$\Delta f := \left|(Wx+b)_k - (Wx'+b)_k\right| \ge \gamma,$
where $\Delta f$ denotes DP sensitivity and is unrelated
to the noise bound~$\Delta$.
Then any $(\eps_0,\delta_0)$-DP guarantee for $M_k$ on
the pair $(x,x')$ with finite~$\eps_0$ necessarily
satisfies $\delta_0 \ge 1$.
Equivalently, $M_k$ cannot satisfy $(\eps_0,\delta_0)$-DP
for any $\delta_0 < 1$.
\end{proposition}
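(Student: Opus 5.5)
The plan is to exhibit a single measurable set $S$ that has full probability under $M_k(x)$ and zero probability under $M_k(x')$, and then read off $\delta_0$ directly from the DP inequality. Write $a := (Wx+b)_k$ and $a' := (Wx'+b)_k$. Without loss of generality $a' \ge a+\gamma$; otherwise swap the roles of $x$ and $x'$, which is legitimate because the DP condition must hold for both orders of a neighboring pair.

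First I will identify the supports. Since $\eta_k \in (-B,B)$ almost surely, $M_k(x)$ lies almost surely in $I := (a-B,\,a+B)$, and $M_k(x')$ lies almost surely in $I' := (a'-B,\,a'+B)$.

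Next I will show that these intervals are disjoint. We have $\sup I = a+B$ and $\inf I' = a'-B \ge a+\gamma-B$. The hypothesis $B<\gamma/2$ gives $\gamma - B > B$, so $\inf I' > a+B = \sup I$, and hence $I\cap I'=\emptyset$. This is the only place the correctness bound enters. Since it is a strict inequality, no boundary care is needed, and the open intervals even leave a gap between them.

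Then I will take $S := I$ in the definition of $(\eps_0,\delta_0)$-DP. This gives $\Pr[M_k(x)\in S]=1$ and $\Pr[M_k(x')\in S]\le \Pr[M_k(x')\notin I']=0$. The inequality $1 \le e^{\eps_0}\cdot 0 + \delta_0$ forces $\delta_0\ge 1$ for every finite $\eps_0$. The restriction to finite $\eps_0$ is needed so that $e^{\eps_0}\cdot 0 = 0$. Finally, I will note that $\delta_0\ge1$ makes the guarantee vacuous, because every probability is at most $1 \le e^{\eps_0}\Pr[\cdot]+\delta_0$. This yields the equivalent formulation that $M_k$ admits no $(\eps_0,\delta_0)$-DP guarantee with $\delta_0<1$ on this pair.

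I do not expect a real obstacle. The only delicate point is the bookkeeping of which side of the pair appears in the numerator, and I will handle it by the WLOG swap above. I will also phrase the support claim in terms of "almost surely" so that it covers arbitrary bounded noise laws, including atoms or discrete distributions.
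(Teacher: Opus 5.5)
Your proposal is correct and follows essentially the same route as the paper. The paper takes $S=\mathrm{supp}(M_k(x))\setminus\mathrm{supp}(M_k(x'))$ and shows it equals the whole support of $M_k(x)$, whereas you take the interval $(a-B,a+B)$ directly; both then read off $\delta_0\ge 1$ from $1\le e^{\eps_0}\cdot 0+\delta_0$. Your WLOG swap is harmless but unnecessary, since $|a-a'|\ge\gamma>2B$ already makes the two intervals disjoint whichever center is larger.
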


\begin{proof}
Let $I_x := \mathrm{supp}(M_k(x))$ and
$I_{x'} := \mathrm{supp}(M_k(x'))$.
Write
$c_x := (Wx+b)_k$ and $c_{x'} := (Wx'+b)_k$.
Since $\eta_k \in (-B,B)$ almost surely, we have
$I_x \subseteq [c_x - B,\, c_x + B]$ and 
$I_{x'} \subseteq [c_{x'} - B,\, c_{x'} + B]$,
where we emphasize that these are containments, not necessarily equalities.
Now consider the set $S := I_x \setminus I_{x'}.$
By construction, $S$ contains exactly those outputs that are attainable under
$x$ but impossible under $x'$.
Applying the $(\eps_0,\delta_0)$-DP condition to the event $S$ yields
$\Pr[M_k(x) \in S]
\le
e^{\eps_0}\Pr[M_k(x') \in S] + \delta_0.$
Since $S \cap I_{x'} = \emptyset$, we have
$\Pr[M_k(x') \in S] = 0$, and therefore
$\Pr[M_k(x) \in S] \le \delta_0.$
Under the bound $B < \gamma/2$, any coordinate gap
$\Delta f \ge \gamma$ implies $\Delta f > 2B$.
Hence the closed intervals
$[c_x - B,\, c_x + B]$ and $[c_{x'} - B,\, c_{x'} + B]$
are disjoint and separated by a positive gap.
It follows that $I_x \cap I_{x'} = \emptyset$, in fact
$S = I_x$.
So, $\Pr[M_k(x) \in S] = \Pr[M_k(x) \in I_x] = 1.$
Combining with the previous inequality gives
$1 \le \delta_0.$ \qed
\end{proof}

The above proof uses only the almost-sure support bound $\eta_k \in (-B,B)$.
In particular, it does not require a density or absolute continuity, and
therefore applies equally to discretely supported RLWE noise and to
almost-surely bounded continuous surrogates.

\begin{remark}[Beyond $(\eps_0,\delta_0)$-DP]
\label{rem:dp-variants}
Proposition~\ref{prop:premise} assumes almost-surely bounded noise
$\eta_k \in (-B,B)$ with $B < \gamma/2$, i.e., the correctness-conditioned
regime. This covers practical discrete-Gaussian RLWE noise as well as clipped
continuous surrogates. By contrast, an unbounded Gaussian falls outside the
hypothesis, since then $I_x=\mathbb{R}$ and the disjoint-support argument
fails; moreover, any Gaussian large enough to overlap adjacent lattice points
already violates $|\eta| < \gamma/2$ with non-negligible probability.
Within the bounded-noise regime, disjoint supports imply total variation
$\mathrm{TV}(M_k(x),M_k(x'))=1$, so any divergence-based privacy notion that
upper-bounds total variation, including standard relaxations of
$(\eps_0,\delta_0)$-DP, becomes infinite or vacuous on such neighbors.
The same obstruction extends to group privacy, since a single lattice-step gap
$\Delta f \ge \gamma$ already gives $\Delta f > 2B$.
\end{remark}

In the \safhire{} setting, a per-coordinate gap of~$\gamma$ is unavoidable.
For the basis queries used in Theorem~\ref{thm:impossibility}, moving from
$x=0$ to $x=e_i$ changes the $k$-th output coordinate by the entry of
$W_{:,i}$, which lies on the quantization lattice of spacing~$\gamma$.
Hence, any scheme with at least one nonzero weight entry admits neighboring
inputs to which Proposition~\ref{prop:premise} applies.
Since $\delta_0 \ge 1$ renders the local-DP premise of shuffle amplification
trivially impossible to satisfy, no choice of shuffle size~$n$, target~$\delta$,
or Gaussian-surrogate noise can restore the guarantee.

One could try to restrict the shuffle-amplification analysis to neighbors with
$\Delta f < \gamma$, thereby placing them outside the scope of
Proposition~\ref{prop:premise}.
In \safhire{}, however, this restriction is unavailable:
its threat model~\cite[Section~3]{BCD+25} allows the client to issue arbitrary
inputs, with no bound on $\|x\|$, so the basis queries of
Theorem~\ref{thm:impossibility} are admissible and the vacuity result applies
directly.
Even a modified variant that artificially restricted sensitivity would recover
only an input-privacy guarantee, which remains orthogonal to model
confidentiality by Proposition~\ref{prop:dp-direction}.

\subsection{Noise/Utility Tradeoff}
\label{sec:tradeoff-thm}
Theorem~\ref{thm:impossibility} rules out confidentiality under the correctness bound $B < \gamma/2$. 
A natural remedy is to relax this bound and inject larger noise, trading decryption accuracy for defense.
We establish a distribution-free lower bound on the noise required to defeat a Lipschitz-fingerprint attack, instantiated in Section~\ref{sec:impact}.

\begin{proposition}
\label{prop:relaxed-recovery}
Let $\Pi$ be a permutation-based protection scheme under
Definition~\ref{def:perm-scheme} with quantization step~$\gamma$,
and let $Q_\gamma$ denote rounding to the nearest $\gamma$-lattice point.
Assume a relaxed noise bound $B > 0$, not necessarily with $B \le \gamma/2$,
and define
$\tilde{s}_i := \sort(Q_\gamma(\pi(W_{:,i}+b)+\eta))$.
Then, for every column~$i$ and every noise realization with
$\|\eta\|_\infty < B$,
\[
  \|\tilde{s}_i - s_i(W,b)\|_\infty
  \le e_B
  := \gamma \Big\lfloor \tfrac{B}{\gamma} + \tfrac{1}{2} \Big\rfloor
  \le B + \tfrac{\gamma}{2}.
\]
\end{proposition}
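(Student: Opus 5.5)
The argument splits into two steps: a coordinatewise bound on rounding error, and the fact that sorting is $1$-Lipschitz in $\ell_\infty$.

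\emph{Step 1 (coordinatewise rounding).} Write $z := W_{:,i}+b$. By the quantization hypothesis of Definition~\ref{def:perm-scheme}, every entry of $z$ lies on the $\gamma$-lattice, and hence so does every entry of $\pi(z)$. Fix a coordinate $k$ and set $c := \pi(z)_k$ and $t := c + \eta_k$, so that $|t-c| < B$. Write $t - c = \gamma(n+f)$ with $n \in \Z$ and $f \in [0,1)$, or the symmetric form for negative offsets. Since $c$ is a lattice point, $Q_\gamma(t) = c + \gamma\,\mathrm{round}((t-c)/\gamma)$, and therefore $|Q_\gamma(t) - c| = \gamma\,|\mathrm{round}((t-c)/\gamma)|$. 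The integer $\mathrm{round}(u)$ with $|u| < B/\gamma$ has absolute value at most $\lfloor B/\gamma + 1/2\rfloor$. Indeed, $|\mathrm{round}(u)| \le |u| + 1/2 < B/\gamma + 1/2$, and an integer strictly below $B/\gamma + 1/2$ is at most its floor. This gives $\|Q_\gamma(\pi(z)+\eta) - \pi(z)\|_\infty \le e_B$.

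\emph{Step 2 (sorting is $1$-Lipschitz).} We need $\|\sort(u)-\sort(v)\|_\infty \le \|u-v\|_\infty$ for all $u,v \in \R^m$. Let $\epsilon := \|u-v\|_\infty$. For each $j$, the $j$-th smallest entry of $u$ is at most the $j$-th smallest entry of $v$ plus $\epsilon$. To see this, note that $v$ has at least $m-j+1$ entries that are $\ge \sort(v)_j$. The corresponding entries of $u$ are all $\ge \sort(v)_j - \epsilon$, so at most $j-1$ entries of $u$ lie below $\sort(v)_j - \epsilon$, which gives $\sort(u)_j \ge \sort(v)_j - \epsilon$. The reverse inequality follows by symmetry. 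Applying this with $u = Q_\gamma(\pi(z)+\eta)$ and $v = \pi(z)$, and using $\sort(\pi(z)) = \sort(z) = s_i(W,b)$, yields $\|\tilde s_i - s_i(W,b)\|_\infty \le e_B$.

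\emph{Final inequality.} We have $\gamma\lfloor B/\gamma + 1/2\rfloor \le \gamma(B/\gamma + 1/2) = B + \gamma/2$.

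The main obstacle is handling ties in $Q_\gamma$, i.e.\ points exactly at half-lattice offsets. The strict inequality $\|\eta\|_\infty < B$ is what ensures that $|\mathrm{round}(u)| \le |u| + 1/2 < B/\gamma + 1/2$ holds under any tie-breaking convention, so that the floor bound goes through.
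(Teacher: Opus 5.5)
Your proof is correct and follows the paper's route: a coordinatewise bound $|Q_\gamma(v_j+\eta_j)-v_j|\le e_B$ for $v=\pi(W_{:,i}+b)$, then permutation-invariance and $\ell_\infty$ non-expansiveness of $\sort$, then $\lfloor x\rfloor\le x$. You also prove the two facts the paper only asserts, namely the integer/floor bound (which is robust to tie-breaking, since it only uses $|Q_\gamma(t)-t|\le\gamma/2$) and the order-statistic Lipschitz bound; note that your counting argument actually establishes $\sort(u)_j\ge\sort(v)_j-\epsilon$ rather than the inequality you announce, but the appeal to symmetry covers both directions.
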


\begin{proof}
Fix a column $i$ and write $v := \pi(W_{:,i}+b) \in \gamma\Z^m.$
Since $\|\eta\|_\infty < B$, each coordinate satisfies $|\eta_j| < B$, and thus $|Q_\gamma(v_j+\eta_j)-v_j|
\le
\gamma \Big\lfloor \tfrac{B}{\gamma}+\tfrac12 \Big\rfloor
= e_B.$
Therefore, $\|Q_\gamma(v+\eta)-v\|_\infty \le e_B.$
Since $\sort$ is invariant under the permutation~$\pi$, and order statistics are non-expansive in $\ell_\infty$,
\[
\|\tilde{s}_i-s_i(W,b)\|_\infty
=
\|\sort(Q_\gamma(v+\eta))-\sort(v)\|_\infty
\le
\|Q_\gamma(v+\eta)-v\|_\infty
\le e_B.
\]
Moreover, since $\lfloor \tfrac{B}{\gamma}+\tfrac{1}{2} \rfloor \le \tfrac{B}{\gamma}+\tfrac{1}{2}$, we have
$e_B
\le
\gamma \Big( \tfrac{B}{\gamma}+\tfrac12 \Big)
=
B+\tfrac{\gamma}{2}.$ \qed
\end{proof}

Parameterizing the bound by the Lipschitz constant of the
fingerprint map makes it directly applicable to the attacks
in Section~\ref{sec:impact}.

\begin{theorem}[Noise/Utility Tradeoff]
\label{thm:tradeoff}
Let $\Pi$ be a permutation-based protection scheme under Definition~\ref{def:perm-scheme}
with relaxed noise bound $B > 0$, let $\mathcal{F}$ be a finite
model family, and let $\phi \colon L \mapsto \phi(L) \in (\R^k,\|\cdot\|_q)$
be a fingerprint map satisfying
\begin{equation}
  \label{eq:fp-lipschitz}
  \|\phi(L) - \phi(L')\|_q
  \le
  L_\phi \cdot \max_i \|s_i - s'_i\|_\infty
  \qquad \text{for some } L_\phi > 0.
\end{equation}
Define $\Delta_\mathcal{F}^\phi
:=
\min_{M \ne M' \in \mathcal{F}}
\|\phi(L(M)) - \phi(L(M'))\|_q$,
and let $\cA_\phi$ denote the round-then-sort nearest-neighbor
classifier under $\phi$ given the $d+1$ chosen queries
$x = 0, e_1, \ldots, e_d$.
Then the followings hold, independently of the noise distribution:
(i) if $L_\phi e_B < \Delta_\mathcal{F}^\phi/2$, then $\cA_\phi$
correctly identifies every $M \in \mathcal{F}$ on every noise
realization satisfying $\|\eta\|_\infty < B$.
(ii) Conversely, defeating $\cA_\phi$ on any $M \in \mathcal{F}$
requires $B \ge \max\{0,\,\Delta_\mathcal{F}^\phi/(2 L_\phi) - \gamma/2\}.$
\end{theorem}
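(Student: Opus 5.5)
The plan is to derive both parts from Proposition~\ref{prop:relaxed-recovery} combined with a standard nearest-neighbour margin argument in the metric $\|\cdot\|_q$ on fingerprints.

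\textbf{Recovered family.} Fix the true model $M=(W,b)\in\mathcal{F}$ and any noise realization with $\|\eta\|_\infty<B$. The adversary $\cA_\phi$ issues the queries $0,e_1,\ldots,e_d$ and applies round-then-sort to each response. This gives a recovered family $\tilde L=(\tilde s_1,\ldots,\tilde s_d)$.

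\textbf{Handling the permutations.} When $\rho=\mathrm{id}$ the spectra are labeled by column index. Otherwise I would take $\phi$ to be defined on the multiset, or equivalently assume $\phi$ is invariant under relabeling of columns, as in the first-round setting. The zero query gives $\sort(b)$ with the same error bound, if $\phi$ uses it.

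\textbf{Part (i).} Proposition~\ref{prop:relaxed-recovery} gives $\max_i\|\tilde s_i-s_i(W,b)\|_\infty\le e_B$ for every column. The Lipschitz hypothesis~\eqref{eq:fp-lipschitz} then yields $\|\phi(\tilde L)-\phi(L(M))\|_q\le L_\phi e_B<\Delta_\mathcal{F}^\phi/2$. For any $M'\ne M$ in $\mathcal{F}$, the triangle inequality gives
\[
\|\phi(\tilde L)-\phi(L(M'))\|_q\ge \Delta_\mathcal{F}^\phi-L_\phi e_B>\Delta_\mathcal{F}^\phi/2>\|\phi(\tilde L)-\phi(L(M))\|_q .
\]
So $M$ is the unique nearest neighbour and $\cA_\phi$ outputs it. This holds on every admissible realization, hence independently of the noise law.

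\textbf{Part (ii).} This is the contrapositive of (i). If $\cA_\phi$ fails on some $M$ for some admissible realization, then (i) forces $L_\phi e_B\ge\Delta_\mathcal{F}^\phi/2$. Using $e_B\le B+\gamma/2$ from Proposition~\ref{prop:relaxed-recovery}, this gives $L_\phi(B+\gamma/2)\ge\Delta_\mathcal{F}^\phi/2$, i.e.\ $B\ge\Delta_\mathcal{F}^\phi/(2L_\phi)-\gamma/2$. Combining with $B>0$ yields the stated maximum.

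\textbf{Main obstacle.} The only delicate point is the column-labeling issue under a hidden input permutation $\rho$. The Lipschitz condition~\eqref{eq:fp-lipschitz} compares $s_i$ with $s'_i$ index-by-index. So I must either restrict to $\rho=\mathrm{id}$, or note that the nearest-neighbour comparison should use the best matching of columns. With a best matching, the recovered multiset matches the true one column-wise under $\rho$ within $e_B$, and the same margin argument goes through.
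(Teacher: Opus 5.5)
Your proposal is correct and follows the paper's proof essentially step for step. Proposition~\ref{prop:relaxed-recovery} together with the Lipschitz condition bounds the fingerprint perturbation by $L_\phi e_B$, the reverse triangle inequality gives the nearest-neighbour margin for (i), and (ii) is the contrapositive combined with $e_B \le B+\gamma/2$. Your remark on the hidden input permutation $\rho$ addresses a point the paper leaves implicit: Proposition~\ref{prop:relaxed-recovery} is stated for $x^\star=e_i$, and the column-norm fingerprint used later is column-permutation invariant. It is a sensible refinement rather than a departure from the paper's argument.
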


\begin{proof}
Fix the true model $M = (W,b) \in \mathcal{F}$ with sorted spectra $(s_i)$,
and let $(\tilde{s}_i)$ be the recovered vectors.
By Proposition~\ref{prop:relaxed-recovery},
$\max_i \|\tilde{s}_i - s_i\|_\infty \le e_B.$
Write $\tilde{L} := (\tilde{s}_i)$ and $L := L(M)$.
Applying \eqref{eq:fp-lipschitz} to $\tilde{L}$ and $L$ gives $\|\phi(\tilde{L}) - \phi(L)\|_q \le L_\phi e_B.$

\emph{(i)}
Assume $L_\phi e_B < \Delta_\mathcal{F}^\phi/2$.
For any $M' \ne M$ in $\mathcal{F}$, the reverse triangle inequality and
the definition of $\Delta_\mathcal{F}^\phi$ yield
\begin{align*}
\|\phi(\tilde{L}) - \phi(L(M'))\|_q
&\ge
\|\phi(L) - \phi(L(M'))\|_q
-
\|\phi(\tilde{L}) - \phi(L)\|_q \\
&\ge
\Delta_\mathcal{F}^\phi - L_\phi e_B
>
L_\phi e_B
\ge
\|\phi(\tilde{L}) - \phi(L)\|_q.
\end{align*}
Hence $\cA_\phi$ uniquely returns $M$.
\emph{(ii)}
This is immediate by contraposition.
If $\cA_\phi$ fails on some $M \in \mathcal{F}$, then
$L_\phi e_B \ge \Delta_\mathcal{F}^\phi/2$, and therefore
$e_B \ge \frac{\Delta_\mathcal{F}^\phi}{2L_\phi}.$
By Proposition~\ref{prop:relaxed-recovery},
$e_B \le B + \frac{\gamma}{2}.$
Combining the two inequalities gives
$B + \frac{\gamma}{2} \ge e_B \ge \frac{\Delta_\mathcal{F}^\phi}{2L_\phi},$
hence
$B \ge \frac{\Delta_\mathcal{F}^\phi}{2L_\phi} - \frac{\gamma}{2}.$
Taking the maximum with $0$ removes the vacuous case in which the
right-hand side is negative. \qed
\end{proof}

\begin{remark}
\label{rem:colnorm-specialization}
Section~\ref{sec:impact} uses the column-norm fingerprint
$\phi(L) := \sort(\|s_1\|_2,\ldots,\allowbreak \|s_d\|_2)$
with $\ell_2$ nearest-neighbor.
This map satisfies Eq.~\eqref{eq:fp-lipschitz} with
$L_\phi = \sqrt{md}$.
Indeed, if $\max_i \|\tilde{s}_i - s_i\|_\infty \le e$, then
the reverse triangle inequality and
$\|\cdot\|_2 \le \sqrt{m}\|\cdot\|_\infty$ give
$|\|\tilde{s}_i\|_2 - \|s_i\|_2| \le \sqrt{m}\,e$
for each $i$.
Since $\sort$ is non-expansive in $\ell_\infty$ and
$\|\cdot\|_2 \le \sqrt{d}\|\cdot\|_\infty$ on $\R^d$, it follows that
$\|\phi(\tilde L)-\phi(L)\|_2 \le \sqrt{md}\,e$.
Thus Eq.~\eqref{eq:fp-lipschitz} holds with $L_\phi=\sqrt{md}$.
For ResNet-20 \texttt{conv1}, where $m=16$ and $d=27$,
at 8-bit precision with $\Delta_\mathcal{F}^\phi \approx 0.298$,
Theorem~\ref{thm:tradeoff} yields
$B \ge 2.7(\gamma/2)$.
This bound is loose and remains well below the empirical
transition in Table~\ref{tab:tradeoff}.
\end{remark}

Theorem~\ref{thm:tradeoff} certifies attack success below the threshold,
not defender safety above it.
The certified bound $2.7(\gamma/2)$ is two to three orders of magnitude below
the empirical transition at $10^2$--$10^3(\gamma/2)$ in
Table~\ref{tab:tradeoff}, because the worst-case bound
$L_\phi=\sqrt{md}$ assumes adversarial alignment across all $m$ coordinates,
whereas symmetric noise gives a tighter $\sqrt{m}$ scaling.
Together, Theorem~\ref{thm:impossibility} covers $B<\gamma/2$,
$\delta_0 \ge 1$ rules out the shuffle route, and
Theorem~\ref{thm:tradeoff} ties brute-force noise to intrinsic fingerprint
separation.

\vspace{-3mm}
\section{Chosen-Input Attack}
\label{sec:attack}

In this section, we instantiate the recovery attack against
\safhire{}~\cite{BCD+25}, the primary hybrid-FHE
instantiation of Definition~\ref{def:perm-scheme}.

\subsection{Multi-Round Layers}
\label{sec:multi-round}

In \safhire{}'s multi-round protocol, the server applies
$\sigma_{r-1}^{-1}$ before evaluating layer~$r$ and
$\sigma_r$ afterward. Accordingly, a basis query $e_i$
issued in round~$r$ probes column $\sigma_{r-1}^{-1}(i)$
of $W_r$. Since $\sigma_{r-1}$ remains fixed within a
session, the $d$ basis queries in round~$r$ collectively
probe every column of $W_r$ exactly once. Moreover,
\safhire{}'s threat model explicitly permits chosen-input
queries~\cite[Section~3]{BCD+25}, and the basis vectors
are admissible plaintext polynomials under its coefficient
encoding~\cite[Section~4.3]{BCD+25}.

\begin{proposition}
\label{prop:fresh-perm}
Under the hypotheses of
Theorem~\ref{thm:impossibility}, if $\sigma_{r-1}$
is sampled independently for each inference request,
then all distinct sorted spectra of $W_r$ can be
recovered using $O(d\log d)$ queries with probability
at least $1-e^{-c}$.
More precisely, recovery succeeds if the
number $T$ of queries satisfies
$T \ge d(\ln d^* + c),$
where $d^* \le d$ denotes the number of distinct
spectra.
\end{proposition}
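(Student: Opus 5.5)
We reduce the statement to a coupon-collector bound, with Theorem~\ref{thm:impossibility} supplying exact per-query recovery. Each query is a basis vector $e_i$ sent in a fresh inference request. In that request the server applies a freshly sampled $\sigma_{r-1}^{-1}$, so the effective input is $e_{J}$ with $J=\sigma_{r-1}^{-1}(i)$. Because $\sigma_{r-1}$ is uniform on $S_d$ and independent across requests, $J$ is uniform on $[d]$ and the indices $J_1,\dots,J_T$ are i.i.d. This holds for any fixed choice of $i$, so the adversary may simply query $e_1$ each time. The round-$r$ output permutation $\sigma_r$ and the noise are irrelevant here: by the round-then-sort argument of Theorem~\ref{thm:impossibility}, each response yields exactly $\sort(W_{r,:,J_t}+b_r)$ whenever $\|\eta\|_\infty<\Delta$. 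Each query therefore returns an exact sample of the spectrum of a uniformly random column.

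Next, partition the columns into $d^*$ classes according to their sorted spectrum. Recovering all distinct spectra amounts to every class being hit at least once. A class $C$ of size $|C|\ge 1$ is missed by all $T$ samples with probability $(1-|C|/d)^T\le (1-1/d)^T\le e^{-T/d}$. A union bound over the $d^*$ classes then gives failure probability at most $d^* e^{-T/d}$. Plugging in $T\ge d(\ln d^*+c)$ makes this at most $d^*e^{-\ln d^*-c}=e^{-c}$, which is the stated bound. Since $d^*\le d$, this also gives $T=O(d\log d)$ for constant $c$.

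Two points need care. The first is the modeling step: we must confirm that ``sampled independently for each inference request'' really makes $J_t$ uniform and independent of the adversary's adaptive choices. It does, provided $\sigma_{r-1}$ is uniform on $S_d$ and drawn fresh after the query is fixed. If the permutation strategy is not uniform, the bound changes, so we state uniformity as an assumption. The second is conditioning on correct decryption. Either we condition on $\|\eta\|_\infty<\Delta$ for all $T$ queries, as Theorem~\ref{thm:impossibility} does, or we absorb the negligible failure probability from Corollary~\ref{cor:safhire} into the error term. Finally, note that the adversary cannot tell which column a given response came from. This does not matter: the target is the set of distinct spectra, and the adversary only outputs the set of distinct rounded-sorted vectors it has observed, which is exact under correct decryption.
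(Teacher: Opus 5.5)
Your proposal is correct and follows the paper's proof: each fresh $\sigma_{r-1}$ makes every query probe a uniformly random column, round-then-sort recovers that column's spectrum exactly, each distinct spectrum is missed with probability at most $(1-1/d)^T \le e^{-T/d}$, and a union bound over the $d^*$ classes gives failure probability at most $e^{-c}$. The paper simply asserts that the probed column is uniform; you make explicit that this needs $\sigma_{r-1}$ uniform on $S_d$ and requires conditioning on correct decryption for all $T$ queries, which the paper leaves implicit.
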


\begin{proof}
Each query probes a uniformly random column.
If a distinct spectrum~$s$ appears with multiplicity $m_s$, then a single query
recovers $s$ with probability $m_s/d \ge 1/d$.
By Theorem~\ref{thm:impossibility}, the queried spectrum is recovered exactly
from each response, and duplicate recoveries can be removed by equality testing
on the sorted vectors.
Fix a distinct spectrum $s$.
The probability that $T$ independent queries all miss $s$ is at most $(1 - 1/d)^T \le e^{-T/d}$.
Applying a union bound over the $d^*$ distinct spectra, the probability that
some spectrum is missed is at most $d^* e^{-T/d}.$
Thus, whenever $T \ge d(\ln d^* + c)$, the failure probability is at most
$d^* e^{-T/d} \le d^* e^{-(\ln d^*+c)} = e^{-c}.$
Therefore, recovery succeeds with probability at least $1-e^{-c}$. \qed
\end{proof}

In the within-session regime
(described in Section~\ref{sec:bg-fhe}), the $d$ basis queries recover
$L(W_r,b_r)$ within the overall $d{+}1$ query budget of
Theorem~\ref{thm:impossibility}. 
By contrast, when the permutation is resampled independently for each query, the query complexity becomes $\Theta(d\log d)$.
On trained ResNet-20, we observe $d^*=d$ in every layer, i.e., no duplicate spectra arise.
Consequently, the bound $T \ge d(\ln d + c)$ yields approximately 
$5{,}400$ queries for the widest layer ($d{=}576$, $c{=}3$).
In simulation, the independently-resampled-permutation attack converges with $T_{\mathrm{emp}}/T_{\mathrm{theory}} \in
[0.58,\,0.87]$ over 5 trials per layer.

\medskip\noindent\textbf{What is recovered?}
For each layer the adversary recovers $L(W,b)$ and hence
every permutation-invariant statistic
derived from it, including the affine column norms
$\|W_{:,i}+b\|_2$ and the empirical value
distribution of each column. For bias-free layers, where $b = 0$ as is standard under batch normalization,
this reduces to the
exact weight multisets $\sort(W_{:,i})$, their column
norms, and the Frobenius norm~$\|W\|_F$.
Since $\sort$ does not
commute with subtraction, the bias cannot be disentangled
from the weights in general; however, this residual
ambiguity does not affect any of the applications in Section~\ref{sec:impact}.

\safhire{}~\cite[Section~5, scenario~(iii)]{BCD+25}
treats leakage of only the weight histogram as benign
residual leakage, purportedly bounded by a shuffle-model DP
guarantee. Section~\ref{sec:dp} shows that this guarantee is vacuous: Proposition~\ref{prop:dp-direction} establishes that the DP argument is formulated in the wrong direction, while Proposition~\ref{prop:premise} shows that its local-DP premise fails under the correctness bound.
Section~\ref{sec:impact} further demonstrates that the
recovered spectra suffice for reliable fingerprinting and
lineage detection, and that suppressing these attacks requires noise at scales for which inference utility itself collapses (see Table~\ref{tab:tradeoff} in Section~\ref{sec:impact}).

\subsection{Experimental Validation}
We validate the attack at two levels of concreteness:
a deployment-faithful TFHE transcript replay on \safhire{}'s backend in Table~\ref{tab:tfhe_transcript}, and simulated per-layer queries on eight ImageNet-pretrained architectures in Table~\ref{tab:imagenet_recovery}.
Our CIFAR-10 evaluation uses ResNet-20 with 21 bias-free convolutional layers after BN folding, 0.27M parameters, mean test accuracy 82.8\%, and a per-model query budget of $\sum_{l=1}^{21}(d_l+1)=5{,}712$.
We also confirm zero-error recovery for randomly initialized models, six bounded
noise distributions, and MLPs with nonzero bias; see
Appendix~\ref{app:experiments}.
The codes for all experiments in this paper are available at \url{https://github.com/JiseungKim90/permutation-confidentiality-experiments}.

\smallskip
\noindent\textbf{TFHE transcript replay.}
To rule out simulation artifacts, we replay the attack end-to-end through
Concrete~v2.11~\cite{ConcretePython}, the TFHE backend of
\safhire{}~\cite{CGGI20,BCD+25}.
Table~\ref{tab:tfhe_transcript} reports the results across all 21 convolutional layers
under both a fixed permutation and a fresh-per-query permutation.

\begin{table}[t]
\centering
\caption{TFHE transcript on all 21 convolutional layers of a trained ResNet-20
($p{=}256$, $11{,}424$ encrypted queries total, $5{,}712$ per regime).
Max recovery error $=0$ for every layer in both fixed-perm and
fresh-perm regimes.}
\label{tab:tfhe_transcript}
\setlength{\tabcolsep}{4pt}
\footnotesize
\begin{tabular}{@{}lrrr|lrrr@{}}
\toprule
Shape & \#layers & $d{+}1$ & Total $q$ & Shape & \#layers & $d{+}1$ & Total $q$ \\
\midrule
$16{\times}27$  & 1 & 28  &    28 &
$32{\times}288$ & 5 & 289 & 1{,}445 \\
$16{\times}144$ & 6 & 145 &   870 &
$64{\times}32$  & 1 &  33 &    33 \\
$32{\times}16$  & 1 &  17 &    17 &
$64{\times}288$ & 1 & 289 &   289 \\
$32{\times}144$ & 1 & 145 &   145 &
$64{\times}576$ & 5 & 577 & 2{,}885 \\
\midrule
\multicolumn{3}{@{}l}{Total (21 layers)} & 5{,}712 &
\multicolumn{3}{l}{Max error (all)} & \textbf{0} \\
\bottomrule
\end{tabular}
\vspace{-5mm}
\end{table}

\smallskip
\noindent\textbf{Architecture-agnostic ImageNet validation.}
We apply the simulated attack to eight torchvision-pretrained ImageNet models, namely ResNet-50, ResNet-101, ResNet-152, VGG-16, DenseNet-121, MobileNet-V2, EfficientNet-B0, and ConvNeXt-Tiny.
These models comprise 600 convolutional layers in total, with up to 58M parameters and a largest layer of size $2048\times 4608$.
Every layer exhibits zero recovery error at all three precision levels and under all six bounded noise distributions, as reported in Table~\ref{tab:imagenet_recovery}.
Applying the same pipeline to the 37 linear layers of torchvision
ViT-B/16 with \texttt{IMAGENET1K\_V1} weights, which
contains 86M parameters, likewise yields $\mathrm{max\_err}=0$ at $p=256$, thereby extending the recovery to transformer architectures.
These 37 layers are exactly the attention query/key/value and output projections and the MLP layers used in transformer blocks. Because the same linear primitives underlie language transformers, including the model targeted by PermLLM~\cite{ZCHZ24}, our recovery applies without modification.

\vspace{-5mm}
\begin{table}[h!]
\centering
\caption{Exact sorted-spectrum recovery on eight ImageNet-pretrained architectures in per-layer simulation.
Max recovery error=0 for every convolutional layer at 4-, 8-, and 12-bit precision across all architectures.
The largest layer has shape $2048 \times 4608$.}
\label{tab:imagenet_recovery}
\footnotesize
\setlength{\tabcolsep}{5pt}
\begin{tabular}{@{}lrr|lrr@{}}
\toprule
Architecture & Layers & Params & Architecture & Layers & Params \\
\midrule
ResNet-50    & 53  & 23{,}454{,}912 & ResNet-101      & 104 & 42{,}382{,}016 \\
ResNet-152   & 155 & 57{,}992{,}384 & VGG-16          &  13 & 14{,}710{,}464 \\
DenseNet-121 & 120 &  6{,}870{,}208 & MobileNet-V2    &  52 &  2{,}189{,}760 \\
EfficientNet-B0 & 81 & 3{,}956{,}192 & ConvNeXt-Tiny  &  22 &  1{,}877{,}472 \\
\bottomrule
\end{tabular}
\vspace{-10mm}
\end{table}

\section{Practical Impact of Sorted Spectrum Leakage}
\label{sec:impact}

The defender's residual claim in Section~\ref{sec:attack} is that the loss of column labels renders the recovered multisets benign: even if $L(W,b)$ is exposed, the adversary still cannot
reconstruct the weight matrix, even up to row and
column permutations. This characterization,
however, substantially understates the leakage.
Many security-relevant quantities are
permutation-invariant and hence are fully determined
by $L(W,b)$. These include column norms, the
per-column empirical distributions $s_i(W,b)$, and
the weight-value histogram. This leakage has three
concrete consequences: \emph{fingerprinting},
\emph{lineage detection} for fine-tuned descendants,
and accelerated logit-based \emph{extraction}.
Moreover, suppressing any of these attacks requires
noise at a scale that materially degrades inference
utility, as shown in
Section~\ref{sec:tradeoff}.

\subsection{Attribution: Fingerprinting and Lineage}
\label{sec:attribution}
Even for a fixed architecture, independently trained
models converge to different local minima under
different random seeds. Their sorted column spectra
are therefore deterministically distinct and can serve
as reliable model identifiers.
Using the 20 trained ResNet-20 checkpoints from
Section~\ref{sec:attack}, we define the
\emph{fingerprint} of a model as the sorted
column-norm vector $\phi(W,b) := \sort(\|s_1\|_2,\ldots,\|s_d\|_2)$
of the \texttt{conv1} layer, which is the specialization
described in
Remark~\ref{rem:colnorm-specialization}, and compare
models by $\ell_2$ distance in the fingerprint space.
Since $\phi$ is invariant under column permutations,
the multiset-level recovery guaranteed by
Theorem~\ref{thm:impossibility}, which is the general
case under a secret $\rho \ne \mathrm{id}$, already
suffices for both fingerprinting and lineage
attribution; the stronger labeled recovery available
when $\rho=\mathrm{id}$ is unnecessary.
Since recovery is exact under the
correctness bound, the within-model distance is zero across all noise
realizations, whereas the minimum between-model distance is
$\Delta_\mathcal{F}^\phi = 0.298$.
Consequently, all $\binom{20}{2}=190$ model pairs are
perfectly separable under $\phi$.
On CIFAR-100, the same fingerprint yields
$\Delta_\mathcal{F}^\phi = 0.565$ after 200 training
epochs, and all 190 pairs remain perfectly
separable.
Each such pair also instantiates the model
confidentiality game of
Definition~\ref{def:mc-game} empirically.
In particular, the round-then-sort adversary of
Proposition~\ref{prop:dp-direction} achieves
$\mathsf{Adv}^{\mathrm{mc}}=1$ on every one of the
190 challenge pairs using the $d$ basis queries of
Theorem~\ref{thm:impossibility}, or a single query in
the first-round setting where $\rho=\mathrm{id}$.

\smallskip
\noindent\textbf{Lineage detection.}
Fine-tuning typically perturbs the weights incrementally, so the sorted column spectrum of a
fine-tuned child remains closer to that
of its parent than to that of an independently
trained checkpoint.
To quantify this effect, we select 5 of the 20
ResNet-20 checkpoints as base models, fine-tune three
children from each base under mild, moderate, and
aggressive regimes, and classify each child against
all 20 bases using $\ell_2$ nearest-neighbor search
on the sorted column norms concatenated across three
convolutional layers.
The resulting separation ratios, defined as the
minimum between-base distance divided by the maximum
within-base distance, are 26.7, 6.0, and 1.59. All
exceed 1, yielding 15/15 correct attributions and
zero false positives over 285 non-lineage pairs.

At ImageNet scale, the same fingerprint map applied
to torchvision-pretrained ResNet-50 V1 and V2, along
with a lineage variant $V_1^{\mathrm{ft}}$ obtained by
fine-tuning V1 for two epochs on Imagenette, gives
$\|\phi(V_1)-\phi(V_2)\|_2 = 1106.43$ and
$\|\phi(V_1)-\phi(V_1^{\mathrm{ft}})\|_2 = 1.38$.
Since $V_1^{\mathrm{ft}}$ remains within the
fine-tuning neighborhood of $V_1$,
$\|\phi(V_2)-\phi(V_1^{\mathrm{ft}})\|_2$
agrees with $\|\phi(V_1)-\phi(V_2)\|_2$ to four
significant digits. This yields an approximate
$800{:}1$ lineage-to-non-lineage distance ratio on a
25.5-million-parameter backbone.

\vspace{-3mm}

\subsection{Distillation Ablation}
\label{sec:distillation}

The leaked column norms further facilitate extraction: adding the
recovered spectral prior to standard logit distillation raises
student accuracy by up to $3.1$~percentage points over the logit-only baseline,
with the permutation-invariant \emph{Col-norms} statistic capturing
most of the gain at large query budgets.
Full results across query budgets $1{,}000$--$5{,}000$ and two
architectures are in Appendix~\ref{app:distillation}.

\subsection{The Privacy/Utility Dilemma}
\label{sec:tradeoff}

\safhire{}~\cite[Remark~4]{BCD+25} suggests increasing
the output noise level as a possible defensive
countermeasure; the study below quantifies the utility cost.
Since its shuffle-DP argument in Theorem~\ref{thm:tradeoff} treats each
round's convolution output as a local randomizer,
this proposal is naturally interpreted as increasing
the noise injected at every layer.
Table~\ref{tab:tradeoff} instantiates this sweep
under two defender models.
\emph{Per-layer} injects clipped-Gaussian noise at
the output of every convolutional layer, matching the
defense suggested in Remark~\ref{rem:dp-variants}, whereas \emph{Proxy},
which we retain only as a reference point, injects
noise solely at the final classifier output and
therefore upper-bounds the defender's attainable
utility.%
\footnote{The closest prior noise-based defense
study~\cite{TZC+25} injects Gaussian noise into input
embeddings against input-privacy attacks on
transformers, which differs both in attack target and
injection point. To the best of our knowledge, no
prior work quantifies per-layer noise as a defense
against model-confidentiality attacks in this scheme
class.}

\begin{table}[t]
\centering
\caption{Confidentiality/utility tradeoff on ResNet-20 and ResNet-56
(CIFAR-10, 8-bit). \emph{Atk err}: mean per-column $\ell_\infty$ error.
\emph{FP}: fingerprint top-1 among 20 checkpoints (ResNet-20 only).
\emph{Pred.agr}: prediction agreement. \emph{Acc}: per-layer test accuracy
under clipped-Gaussian noise ($N{=}200$ for ResNet-20, $N{=}300$ for ResNet-56).}
\label{tab:tradeoff}
{\footnotesize
\setlength{\tabcolsep}{3pt}
\begin{tabular}{@{}r|rrrr|rrr@{}}
\toprule
& \multicolumn{4}{c|}{ResNet-20} & \multicolumn{3}{c}{ResNet-56} \\
\cmidrule(lr){2-5}\cmidrule(l){6-8}
Mult. & Atk err & FP & Pred.agr & Acc & Atk err & Pred.agr & Acc \\
\midrule
$1{\times}$      & $0$                   & $20/20$ & $100.0\%$ & $82.2$ & $0$                   & $100.0\%$ & $82.2$ \\
$10{\times}$     & $1.4{\times}10^{-2}$  & $20/20$ & $99.8\%$  & $82.2$ & $1.3{\times}10^{-2}$  & $99.8\%$  & $82.2$ \\
$100{\times}$    & $1.1{\times}10^{-1}$  & $20/20$ & $98.3\%$  & $82.2$ & $1.1{\times}10^{-1}$  & $98.2\%$  & $82.2$ \\
$1{,}000{\times}$& $1.02$                & $1/20$  & $63.7\%$  & $59.0$ & $1.01$                & $51.8\%$  & $48.9$ \\
$5{,}000{\times}$& $6.34$                & $1/20$  & $10.6\%$  & $10.1$ & $6.34$                & $9.6\%$   & $10.0$ \\
\bottomrule
\end{tabular}
}
\vspace{-5mm}
\end{table}

Fingerprinting remains effective up to a $100\times$ noise multiplier with
essentially no utility loss: all 20 ResNet-20 models are still identified,
and per-layer agreement remains $98.3\%$.
It breaks down only at noise levels of at least $1{,}000\times$, where
per-layer accuracy under clipped-Gaussian noise has already dropped to
$59.0\%$ on ResNet-20 and $48.9\%$ on ResNet-56; see
Figure~\ref{fig:phase}.
For ImageNet-scale ResNet-50 on Imagenette with $N=20$, utility collapses
between $30\times$ and $70\times$, about an order of magnitude earlier than
on CIFAR, since noise compounds across 53 layers, as shown by the green
curve in Figure~\ref{fig:phase}.

Theorems~\ref{thm:impossibility}--\ref{thm:tradeoff}
are distribution-agnostic:
across all six bounded noise distributions tested
(zero, Gaussian, triangular, Laplace, uniform, and
worst-case $\eta_k\in\{{+}B,{-}B\}$), no distribution
reverses the ordering; adopting a smoother noise law
can at best delay the utility collapse by a constant
factor.
Full per-distribution accuracy figures are in
Appendix~\ref{app:experiments}.

\begin{figure}[t]
\centering
\includegraphics[width=0.5\linewidth]{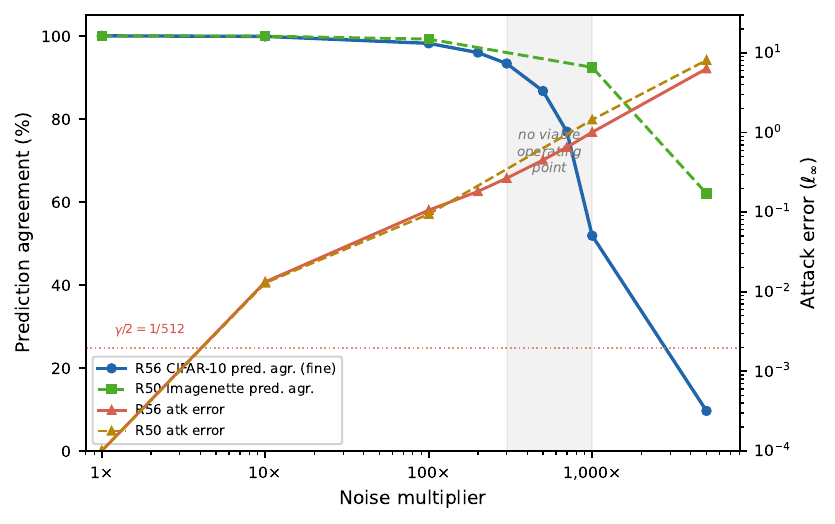}
\caption{Per-layer noise phase transition for ResNet-56 CIFAR-10
(blue, $N{=}100$, nine levels) and ResNet-50 Imagenette (green,
$N{=}20$), clipped-Gaussian at $p{=}256$.
Attack error (red, right axis) and $\gamma/2{=}1/512$ for reference.
Shaded band: attack broken but utility intact; neither curve enters it.}
\label{fig:phase}
\vspace{-3mm}
\end{figure}

In the per-layer noise-injection sweep with i.i.d. noise, no choice of
noise scale, architecture, or bounded noise distribution defeats the attack
while preserving usable accuracy.
Thus, this defense family admits no viable operating point.
Moreover, repeating each column query $k$ times and averaging reduces the
recovery error by a factor of $1/\sqrt{k}$:
at $1{,}000{\times}$ noise, $k=5$ lowers the mean error from $0.92$ to $0.38$.
Thus, increasing the noise raises the query cost quadratically rather than
preventing recovery.
Alternative escape routes, including input validation, zero-knowledge
well-formedness proofs, function secret sharing, and fully server-side FHE,
are discussed in Section~\ref{sec:discussion}.

\section{Discussion}
\label{sec:discussion}

\noindent\textbf{Coverage across the scheme class.}
Theorem~\ref{thm:impossibility} relies on three hypotheses:
(P) the client observes shuffled, noisy outputs of $Wx+b$;
(N) the noise satisfies the correctness bound on a known quantization lattice;
and (C) the client may issue chosen-input queries.
They are worth isolating because closely related schemes satisfy different subsets.
Under its stated threat model, \safhire{}~\cite{BCD+25} satisfies all three at every layer and is therefore the primary target.
STIP~\cite{YZL24} and Centaur~\cite{LCZ+25} violate both (N) and (C): they use exact floating-point computation without quantization-bounded noise, keep intermediate activations encrypted or secret-shared, and reveal only the final output, so Theorem~\ref{thm:impossibility} does not yield a meaningful client-side attack.
Fission~\cite{UMS+25} violates (P), since as an MPC protocol it never reveals a permuted layer output to the client.
PermLLM~\cite{ZCHZ24} falls outside Definition~\ref{def:perm-scheme}, since its permutation acts on the sequence axis rather than the output-channel axis $\pi \in S_m$.

\smallskip\noindent
\textbf{Compound leakage.}
The attack recovers $s_i(W,b)$ rather than the full weight matrix. In \safhire{}, moreover, the final
layer is left unshuffled to enable client-side
$\arg\max$ evaluation. As a result, the recovered
spectra combine with exact logit access to form a
compound leakage channel, exposing both the model's
functionality and its identity, as discussed in
Section~\ref{sec:distillation}.

\smallskip\noindent
\textbf{Necessary conditions for a defense.}
Any viable defense must invalidate at least one hypothesis of
Theorem~\ref{thm:impossibility}.
The known design space offers three broad escape routes.
\emph{Secret linear transforms} mask $Wx+b$ so that, without the key, it is
computationally indistinguishable from random, violating~(P).
\emph{Function secret sharing} across non-colluding servers prevents any one
party from observing the permuted output, violating both~(P) and~(C).
\emph{Fully server-side FHE} withholds intermediate decryptions from the
client, violating~(C).
GAZELLE~\cite{JVC18}, DELPHI~\cite{MLS+20}, and
CrypTFlow2~\cite{RRK+20} instantiate variants of the second route, though at
the bandwidth cost that permutation-based schemes were meant to avoid.

\smallskip\noindent
\textbf{Per-query permutations.}
Because the recovered quantity is a permutation-invariant summary, the guarantee of Theorem~\ref{thm:impossibility} holds for any permutation strategy, including one that draws a fresh permutation for every query. Table~\ref{tab:tfhe_transcript} confirms this empirically: the fresh-per-query regime leaves the maximum recovery error at zero across all 21 layers, so refreshing the permutation on each query does not strengthen the scheme.

\medskip\noindent\textbf{Acknowledgements.}
The authors would like to thank the anonymous reviewers for their helpful comments.
This work was supported by the Institute of Information \& Communications Technology Planning \& Evaluation~(IITP) grant funded by the Korea government(MSIT) (No. RS-2024-00399491, 75\%). Jiseung Kim was also supported by the National Research Foundation of Korea (NRF) grant funded by the Korean government (MSIT) (No. RS-2026-25477662, 25\%).

\bibliographystyle{splncs04}
\bibliography{ref}

\appendix

\section{CIFAR-10 / ResNet-20 Experimental Details}
\label{app:cifar}
Training details for the 20 ResNet-20 checkpoints used in
Section~\ref{sec:attack}: SGD with learning rate $0.1$, momentum $0.9$,
weight decay $10^{-4}$, MultiStepLR milestones at epochs 5 and 8,
batch size 128, 10 epochs, and seeds $0,\ldots,19$, yielding mean test
accuracy 82.8\%.
Theorem~\ref{thm:impossibility} is deterministic under the correctness
bound, so zero-error recovery does not depend on teacher accuracy;
longer training affects only the fingerprint-separation constant
$\Delta_\mathcal{F}^\phi$ in Section~\ref{sec:attribution}.
The CIFAR-100 comparison there, trained for 200 epochs, gives the larger
value $\Delta_\mathcal{F}^\phi = 0.565$ and still separates all 190 pairs.
All 21 convolution modules use \texttt{bias=False}, so before batch-normalization
folding we have $b=0$ and the raw convolution reduces to the exact weight multiset.
In the artifact code, BN layers are folded into the preceding convolution via
the standard deployment transform
$W \mapsto \gamma W/\sigma_{\mathrm{BN}}$ and
$b \mapsto \gamma(\mu_{\mathrm{conv}}-\mu_{\mathrm{BN}})/\sigma_{\mathrm{BN}}+\beta$.
Accordingly, the recovered object is the folded affine pair $(W',b')$ that the
client actually sees as the layer output.
Residual branches $y=F(x)+x$ are probed branchwise, consistent with
\safhire{}'s per-layer ciphertext reply pattern.
The per-layer input dimensions $d_l$ range from 16 for
\texttt{layer2.0.shortcut} to 576 for \texttt{layer3.2.conv2}, yielding the
per-model budget
$\sum_{l=1}^{21}(d_l+1)=5{,}712$
from Theorem~\ref{thm:impossibility}.

\section{Tightness of the Correctness Bound}
\label{app:proofs}

\begin{proposition}
\label{prop:tight}
If $\Delta > \gamma/2$, there exists $\eta$ with
$\|\eta\|_\infty < \Delta$ under which no deterministic
lattice-valued algorithm recovers $\sort(W_{:,i}+b)$
correctly.
\end{proposition}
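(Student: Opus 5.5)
The plan is a two-instance indistinguishability argument. We exhibit two layers whose sorted spectra differ, together with admissible noise realizations, such that the observed responses are literally identical. Any deterministic algorithm then outputs the same answer on both, so it errs on at least one. The construction is local to a single coordinate.

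\textbf{Construction.} Take $m\ge1$ and fix a column index $i$. We choose two affine layers $(W,b)$ and $(W',b')$ that agree everywhere except in one coordinate $k$ of the $i$-th affine column. There we set
\[
(W'_{:,i}+b')_k=(W_{:,i}+b)_k+\gamma .
\]
All other entries are equal, and both columns lie on the $\gamma$-lattice. This choice ensures that both instances satisfy Definition~\ref{def:perm-scheme} with quantization step $\gamma$. Since the two affine columns differ in a single entry by a nonzero amount, their multisets differ, and therefore $\sort(W_{:,i}+b)\ne\sort(W'_{:,i}+b')$.

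To avoid touching other columns (e.g.\ through $b$), I would simply take $b=b'$ and change only $W_{k,i}$. We also fix the permutation $\pi$ identically in both worlds, which is allowed because the statement only needs existence of a bad realization.

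\textbf{Noise choice.} Let $c$ denote the original entry, so the modified entry is $c+\gamma$. Since $\Delta>\gamma/2$, we can pick $\eta_k=\gamma/2$ in the first world and $\eta'_k=-\gamma/2$ in the second. Both satisfy $|\eta_k|<\Delta$. The observed coordinate is then $c+\gamma/2$ in both worlds. All other coordinates receive identical noise, say zero, so $\|\eta\|_\infty<\Delta$ and $\|\eta'\|_\infty<\Delta$. For queries other than $e_i$, the responses coincide trivially, with zero noise in both worlds. Hence the adversary's entire (adaptive) transcript is the same under both instances.

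\textbf{Conclusion.} A deterministic algorithm, lattice-valued or not, produces the same output on identical views. That single output cannot equal both $\sort(W_{:,i}+b)$ and $\sort(W'_{:,i}+b')$. So on at least one of the two instances, with its corresponding noise vector, recovery fails. This yields the claimed $\eta$.

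The only subtle point is making sure the midpoint trick respects the strict inequality $\|\eta\|_\infty<\Delta$, which is exactly where $\Delta>\gamma/2$ is used. A secondary point is phrasing the quantifiers so that ``there exists $\eta$'' is read as existence of an instance–noise pair defeating the algorithm. The intended reading, where the algorithm is fixed first, is what the argument delivers.
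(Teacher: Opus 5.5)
Your proof is correct and is essentially the paper's argument: two instances one lattice step apart in a single coordinate, with admissible noise that puts both at the same observation, so any deterministic decoder errs on one of them. Your symmetric choice $\pm\gamma/2$ works just as well as the paper's pair $\eta^*$, $\eta^*-\gamma$ with $\eta^*\in(\gamma/2,\min(\gamma,\Delta))$, since $\gamma/2<\Delta$.

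The one overstatement is your side claim that the entire adaptive transcript coincides. That holds for the basis queries $0,e_1,\ldots,e_d$, and the paper's single-observation proof needs nothing more. For other queries with $x_i\neq 0$, however, the two responses differ by $\gamma x_i$ in coordinate $k$; for example, $2e_i$ cannot be absorbed when $\Delta\le\gamma$. If you want the stronger claim, shift $b_k$ by $\gamma$ instead of $W_{k,i}$. Then every response moves by exactly $\gamma$ in that coordinate, and noise $\pm\gamma/2$ on every query absorbs it.
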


\begin{proof}
Fix $\eta^* \in (\gamma/2,\min(\gamma,\Delta))$, which is nonempty since
$\Delta>\gamma/2$.
Two instances share the same observation
$o := k\gamma+\eta^*$:
instance~A with true value $k\gamma$ and noise
$\eta^*$, and instance~B with true value $(k{+}1)\gamma$
and noise $\eta^*-\gamma$, where
$|\eta^*-\gamma|=\gamma-\eta^*<\gamma/2<\Delta$ since $\eta^*<\gamma$.
Distinct true values and identical observation force any deterministic
algorithm to err on at least one instance.
Since $\sort$ is deterministic, the ambiguity carries over to the
sorted spectrum.  \qed
\end{proof}

\section{Additional Experimental Validation}
\label{app:experiments}

\smallskip
\noindent\textbf{Noise distribution robustness.}
We test six bounded noise distributions with $\|\eta\|_\infty < \Delta$
against the per-layer noise-injection defense of Section~\ref{sec:tradeoff}.
Under zero noise, ResNet-20 and ResNet-56 retain $82.3\%$ and $82.2\%$ accuracy at both
$m=100$ and $m=1{,}000$.
Under Gaussian noise, both remain at $82.2\%$ for $m=100$, then drop to
$59.0\%$ for ResNet-20 and $48.9\%$ for ResNet-56 at $m=1{,}000$.
Heavier-tailed distributions collapse earlier: at $m=1{,}000$, triangular
reaches $41.3\%$ / $29.8\%$, Laplace $21.6\%$ / $13.6\%$, and uniform
$18.7\%$ / $12.5\%$.
The worst-case deterministic envelope $\eta_k\in\{{+}B,{-}B\}$ is most
damaging, reaching $76.8\%$ / $75.6\%$ at $m=100$ and $10.0\%$ for both
architectures at $m=1{,}000$.
No distribution reverses this ordering, supporting the distribution-agnostic
guarantee of Theorem~\ref{thm:impossibility}.

\smallskip
\noindent\textbf{ImageNet-scale experiments.}
Table~\ref{tab:imagenet_recovery} uses torchvision
\texttt{IMAGENET1K\_V1} weights (eight CNNs and
ViT-B/16). The ResNet-50 per-layer curve
(Fig.~\ref{fig:phase}) uses a fixed $500$-image
Imagenette-val subset ($50$/class,
\texttt{random.seed(0)}) with top-1 against real
ImageNet-1K labels via the WNID$\to$index map. The
lineage variant $V_1^{\mathrm{ft}}$ is V1 after two
epochs of SGD on the same subset (LR~$10^{-3}$,
batch $32$).

\smallskip
\noindent\textbf{Bias-having MLP.}
To validate the bias case of Theorem~\ref{thm:impossibility},
we train a two-layer MLP with biases on MNIST
($97.7\%$ test accuracy; $\mathtt{fc1}: 128 \times 784$,
$\mathtt{fc2}: 64 \times 128$, $8$-bit precision).
For every column, $\sort(W_{:,i}{+}b) \ne
\sort(W_{:,i}) + \sort(b)$, confirming that the bias
cannot be separated from the recovered multiset;
round-then-sort recovers $\sort(W_{:,i}{+}b)$ exactly
on all columns of both layers.

\section{Distillation Ablation}
\label{app:distillation}

We compare four knowledge-distillation variants that differ only in
the spectral prior drawn from the recovered $L(W,b)$:
\emph{Baseline} (logits only);
\emph{First} ($\sort(W_{:,i})$ from \texttt{conv1});
\emph{Col-norms} (column norms across all 21 convolutional layers);
\emph{Full} (column norms + per-layer weight-value remapping).
A \emph{Scrambled} control replaces teacher column norms with those
of an independently trained model to isolate structural vs.\
teacher-specific gains.

\begin{table}[t]
\centering
\caption{Logit distillation test accuracy (\%) on CIFAR-10 under
four spectrum-prior ablations (mean~$\pm$~std, 10 seeds).
Teacher accuracies: $82.92\%$ (ResNet-20) and $82.47\%$ (ResNet-56).}
\label{tab:kd_ablation}
\setlength{\tabcolsep}{3pt}
\footnotesize
\begin{tabular}{@{}llcccc@{}}
\toprule
Architecture & $q$ & Baseline & First & Col-norms & Full \\
\midrule
\multirow{3}{*}{ResNet-20}
 & 1{,}000 & $47.21\pm1.21$ & $48.60\pm1.51$ & $50.09\pm0.94$ & $51.11\pm1.06$ \\
 & 2{,}000 & $56.16\pm1.60$ & $58.84\pm1.20$ & $59.87\pm0.91$ & $60.07\pm0.82$ \\
 & 5{,}000 & $71.58\pm1.67$ & $72.71\pm2.14$ & $74.66\pm0.50$ & $74.12\pm0.97$ \\
\midrule
\multirow{2}{*}{ResNet-56}
 & 2{,}000 & $53.28\pm1.68$ & $53.60\pm1.10$ & $55.38\pm1.37$ & $54.53\pm1.64$ \\
 & 5{,}000 & $70.99\pm1.45$ & $72.46\pm2.07$ & $73.50\pm0.92$ & $72.53\pm1.10$ \\
\bottomrule
\end{tabular}
\vspace{-5mm}
\end{table}

On ResNet-20, \emph{Full} is strictly best at $q\le 2{,}000$;
\emph{Col-norms} becomes competitive at $q=5{,}000$
($74.66\pm0.50$ vs.\ $74.12\pm0.97$).
On ResNet-56, \emph{Col-norms} already matches or exceeds \emph{Full}
at both budgets. The \emph{Scrambled} control ($74.02\pm1.10$ at
$q=5{,}000$) is within one std of \emph{Col-norms}, indicating the
gain is largely structural.

\smallskip\noindent\textbf{Configuration.}
Teachers: \texttt{resnet20\_seed0.pt} ($82.92\%$) and
\texttt{resnet56\_seed0.pt} ($82.47\%$); students trained from
scratch with SGD (LR~$0.05$, momentum~$0.9$, wd~$10^{-4}$,
cosine anneal, $100$ epochs, batch~$128$).
KL on softmax logits at $T{=}4$; column-norm regularizer
weight $\lambda{=}5$ for \emph{Col-norms}/\emph{Full}.
Query budgets drawn from a fixed $5{,}000$-image random subset
(\texttt{random.seed(0)}); $10{,}000$-image test set; $10$ seeds.

\end{document}